\newcommand\pgfmathsinandcos[3]{%
  \pgfmathsetmacro#1{sin(#3)}%
  \pgfmathsetmacro#2{cos(#3)}%
} 
\newcommand\LongitudePlane[3][current plane]{%
  \pgfmathsinandcos\sinEl\cosEl{#2} 
  \pgfmathsinandcos\sint\cost{#3} 
  \tikzset{#1/.estyle={cm={\cost,\sint*\sinEl,0,\cosEl,(0,0)}}}
}
\newcommand\LatitudePlane[3][current plane]{%
  \pgfmathsinandcos\sinEl\cosEl{#2} 
  \pgfmathsinandcos\sint\cost{#3} 
  \pgfmathsetmacro\yshift{\cosEl*\sint}
  \tikzset{#1/.estyle={cm={\cost,0,0,\cost*\sinEl,(0,\yshift)}}} %
}
\newcommand\DrawLongitudeCircle[2][1]{
  \LongitudePlane{\angEl}{#2}
  \tikzset{current plane/.prefix style={scale=#1}}
  \pgfmathsetmacro\angVis{atan(sin(#2)*cos(\angEl)/sin(\angEl))} %
  \draw[current plane] (\angVis:1) arc (\angVis:\angVis+180:1); 
  \draw[current plane,dashed] (\angVis-180:1) arc (\angVis-180:\angVis:1); 
}
\newcommand\DrawLatitudeCircle[2][1]{
  \LatitudePlane{\angEl}{#2}
  \tikzset{current plane/.prefix style={scale=#1}}
  \pgfmathsetmacro\sinVis{sin(#2)/cos(#2)*sin(\angEl)/cos(\angEl)} 
  \pgfmathsetmacro\angVis{asin(min(1,max(\sinVis,-1)))}  
  \draw[current plane] (\angVis:1) arc (\angVis:-\angVis-180:1); 
  \draw[current plane,dashed] (180-\angVis:1) arc (180-\angVis:\angVis:1);
}
\newcommand\projection[5]{ 
    \pgfmathsinandcos\sinEl\cosEl{#1}
    \pgfmathsinandcos\sinlon\coslon{#2}
    \pgfmathsinandcos\sinlat\coslat{#3}
    \pgfmathsetmacro#4{\coslon*\coslat}
    \pgfmathsetmacro#5{\sinlon*\sinEl*\coslat+\sinlat*\cosEl}
}
\newcommand\DrawArcNoMark[9][\R]{
    \pgfmathsinandcos\sinEl\cosEl{#2}
    \pgfmathsinandcos\sinxlon\cosxlon{#3}
    \pgfmathsinandcos\sinxlat\cosxlat{#4}
    \pgfmathsinandcos\sinylon\cosylon{#5}
    \pgfmathsinandcos\sinylat\cosylat{#6}
    \pgfmathsinandcos\sinxylat\cosxylat{#4+#6}
    
    \pgfmathsetmacro\arclen{acos(\sinxlat*\sinxylat+\cosxlat*\cosxylat*\cosylon)}
    \pgfmathsinandcos\sinarclen\cosarclen{\arclen}
    \pgfmathsetmacro\zlon{\ifnumcomp{0}{=}{#4}{#3+90}{atan(-\sinylon/(\sinxlat*(\cosxlat*\sinxylat/\cosxylat-\sinxlat*\cosylon)))}}
    \ifnumcomp{0}{>}{\zlon}{\pgfmathsetmacro\zlon{\zlon+180}}{}
    
    \pgfmathsinandcos\sinzlon\coszlon{\zlon}
    
    \pgfmathsetmacro\zlat{atan(\coszlon*\sinxlat/\cosxlat+\sinzlon*(\sinxylat/\cosxylat-\cosylon*\sinxlat/\cosxlat)/\sinylon)}
    \pgfmathsinandcos\sinzlat\coszlat{\zlat}
    \projection\angEl{\zlon+#3}\zlat\zx\zy
    \projection\angEl{#3}{#4}\startx\starty

    
    \tikzset{newplane/.estyle={cm={
        \startx,
        \starty,
        \zx,
        \zy,
        (0,0)
    },scale=#1}}
    \draw[newplane,->,line width=0.35mm] (1,0) arc (0:\arclen/2:1) node[below right] {\small #7} arc (\arclen/2:\arclen:1) node[#9 right] {\small #8};
}
\tikzset{%
  >=latex, 
  inner sep=0pt,%
  outer sep=2pt,%
  mark coordinate/.style={inner sep=0pt,outer sep=0pt,minimum size=3pt,fill=black,circle}%
} 
\newtheorem{definition}{Definition}
\newtheorem{theorem}{Theorem}
\newtheorem{lemma}{Lemma}
\renewcommand*{\vec}{\bm}
\newcommand{\dCC}[0]{d_{\rho}}
\newcommand{\louv}[0]{\mathcal{L}}
\newcommand{\parp}[0]{\mathcal{P}}
\newcommand{\sss}[1]{{\rm\scriptscriptstyle{#1}}}
\newcommand{\intra}[0]{\textrm{Intra}}
\newcommand{\inter}[0]{\textrm{Inter}}
\title{The Projection Method: a Unified Formalism for Community Detection}
\author{ \href{https://orcid.org/0000-0002-7197-7682}{\includegraphics[scale=0.06]{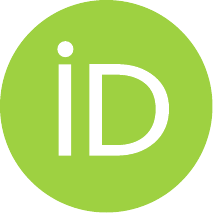}\hspace{1mm}Martijn Gösgens} \\
	Eindhoven University of Technology \\
	\texttt{research@martijngosgens.nl} \\
	\And
	\href{https://orcid.org/0000-0003-1331-9697}{\includegraphics[scale=0.06]{orcid.pdf}\hspace{1mm}Remco van der Hofstad} \\
	Eindhoven University of Technology\\
	 \AND
  \href{https://orcid.org/0000-0002-6750-3484}{\includegraphics[scale=0.06]{orcid.pdf}\hspace{1mm}Nelly Litvak} \\
	Eindhoven University of Technology\\
}
\begin{document}
\maketitle

\begin{abstract}
	We present the class of \emph{projection methods} for community detection that generalizes many popular community detection methods. In this framework, we represent each clustering (partition) by a vector on a high-dimensional \emph{hypersphere}.
A community detection method is a projection method if it can be described by the following two-step approach: 1) the graph is mapped to a \emph{query vector} on the hypersphere; and 2)
the query vector is projected \replaced{on}{to} the set of clustering vectors. This last \emph{projection} step is performed by minimizing the distance between the query vector and the clustering vector, over the set of clusterings.
We prove that optimizing Markov stability, modularity, the likelihood of planted partition models and correlation clustering fit this framework.
A consequence of this equivalence is that algorithms for each of these methods can be modified to perform the projection step in our framework. In addition, we show that these different methods suffer from the same \emph{granularity problem}: they have parameters that control the granularity of the resulting clustering, but choosing these to obtain clusterings of the desired granularity is nontrivial. We provide a general heuristic to address this granularity problem, which can be applied to any projection method.
Finally, we show how, given a generator of graphs with community structure, we can optimize a projection method for this generator in order to obtain a community detection method that performs well on this generator.
\end{abstract}

\keywords{Community detection \and Clustering \and Modularity \and Markov stability \and Correlation clustering \and Granularity}

\section{Introduction}
In complex networks, there \replaced{often are}{are often} groups of nodes that are better connected internally than to the rest of the network. In network science, these groups are referred to as \emph{communities}. These communities often have a natural interpretation: they correspond to friend groups in social networks, subject areas in citation networks, or industries in trade networks. Community detection is the task of finding these groups of nodes in a network. This is typically done by partitioning the nodes, so that each node is assigned to exactly one community. There are many different methods for community detection~\citep{fortunato2010community,fortunato2016community,rosvall2019different}. Yet, it is not easy to say which method is preferable in a given setting.

Most existing methods in network science\deleted{,} detect communities by maximizing a \emph{quality function} that measures how well a clustering into communities fits the network at hand. One of the most widely-used quality functions is modularity~\citep{newman2004finding}, which measures the fraction of edges that are inside communities, and compares this to the fraction that one would expect in a random graph without community structure. One of the main advantages of modularity maximization is that it does not require one to specify the number of communities that one wishes to detect. However, this does not mean that modularity automatically detects the desired number of communities: it is known that in large networks, modularity maximization is unable to detect communities below or above a given size~\citep{Fortunato2007}. This problem is often referred to as the \emph{resolution limit}\added{,} or the \emph{granularity problem}, of modularity maximization. This problem is often ameliorated by introducing a \emph{resolution parameter}~\citep{reichardt2006statistical,traag2011narrow},  which allows \added{one} to control the range of community sizes that modularity maximization is able to detect.

One of the most popular algorithms for modularity maximization is the \emph{Louvain algorithm}~\citep{blondel2008fast}. This algorithm performs a greedy maximization and is able to find a local maximum of modularity in \replaced{linear time (empirically)}{(empirically) linear time} in the number of network edges.

Community detection is closely related to the more general machine learning task of \emph{data clustering}, as we essentially cluster the nodes based on network topology.
In data clustering, the objects to be clustered are typically represented by vectors, and one uses methods like $k$-means~\citep{jain2010data} or spectral clustering~\citep{von2007tutorial} to find a spatial clustering of these vectors so that nearby vectors are assigned to the same cluster.
\added{Community detection can be considered as an instance of clustering, where the elements to be clustered are network nodes.}

In this study, we unify several popular community detection and clustering methods into a single geometric framework. 
We do so by describing a metric space of clusterings, where we represent each clustering $C$ by a binary vector $\vec{b}(C)$ indexed over the node pairs, i.e., $\vec{b}(C)\in\mathbb{R}^{n\choose 2}$, where $n$ is the size of the network. We say that a community detection method is a \emph{projection method} if it is equivalent to the following two-step approach: firstly, the graph is mapped to a \emph{query vector} $\vec{q}\in\mathbb{R}^{{n\choose 2}}$. Secondly, this query vector is \emph{projected} to the set of clustering vectors. That is, we search for the clustering vector $\vec{b}(C)$ that minimizes the distance to $\vec{q}$. 

It turns out that many community detection methods fit this framework. In~\cite{gosgens2023hyperspherical}, we prove that modularity maximization is a projection method. In this work, we additionally show that several other popular community detection methods are projection methods. In Section~\ref{sec:equivalences}, we show that Correlation Clustering~\citep{bansal2004correlation}, the maximization of Markov stability~\citep{delvenne2010stability,lambiotte2014random} and likelihood maximization for several generative models~\citep{avrachenkov2020community} are projection methods. We emphasize that in this paper, we establish equivalences between community detection methods in \replaced{the strictest}{most strict} mathematical sense. As such, our analytical results are much stronger than merely pointing out that methods are similar or related.  Specifically, when we say that two methods are equivalent, we mean that their quality functions $f_1$ and $f_2$ define the \emph{exact} same rankings of clusterings, so that for all clusterings $C_1,C_2$, $f_1(C_1)\geq f_1(C_2)$ holds if and only if $f_2(C_1)\geq f_2(C_2)$. 

Some relations between existing community detection methods were already known~\citep{veldt2018correlation,newman2016equivalence}\replaced{. The novelty of this work is that we unify many community detection methods into a single class of projection methods, and uncover the geometric structure that is baked into each of these methods.}{, but when we formulate these methods as projection methods, we uncover the geometric structure that is baked into them.} \replaced{Furthermore, we demonstrate the following advantages of this geometric perspective.}

\added{Firstly, we show that any community detection method that maximizes or minimizes a weighted sum over pairs of vertices is a projection method. This unifies many well-known methods (Correlation Clustering~\citep{bansal2004correlation}, Markov Stability~\citep{delvenne2010stability}, modularity maximization~\citep{newman2004finding}, likelihood maximization~\citep{avrachenkov2020community}), and any other current or future method that can be presented in this form. Importantly, the hyperspherical geometry comes with natural measures for clustering granularity (the latitude) and the similarity between clusterings (the correlation distance). These measures are additionally related to the quality function (the angular distance) by the \emph{hyperspherical law of cosines}, as we explain in Section~\ref{sec:projection-method}.}

\added{Secondly, this geometric framework yields understanding that all community detection methods that are generalized by the projection method, suffer from  the same  \emph{granularity problem}. That is, these methods require parameter tuning to produce communities of the desired granularity. In Section~\ref{sec:granularity} we use the hyperspherical geometry to derive a general heuristic that addresses this problem. We demonstrate that this heuristic, obtained in our earlier work~\cite{gosgens2023correcting}, can be applied to any projection method.} 

\replaced{Thirdly, projection methods can be combined by taking linear combinations of their query vectors. In Section~\ref{sec:training}, we demonstrate how we can efficiently find a linear combination that performs well in a given setting.}{
This geometric structure allows us to compare the relative positions of the corresponding query vectors. In addition, it allows us to combine different methods by taking linear combinations of their corresponding query vectors, as we will demonstrate in Section~\ref{sec:training}.}

\added{
As a side remark, we note that in network science, the term ``clustering" is also used to refer to the abundance of triangles in real-world networks, which is often quantified by the \emph{clustering coefficient}~\citep{watts1998collective,newman2003properties}.
The presence of communities usually goes hand in hand with an abundance of triangles~\citep{peixoto2022disentangling}. We emphasize that in the present work, we use the term ``clustering" to refer to data clustering, and not to the clustering coefficient.  Nevertheless, the global clustering coefficient can be expressed in terms of our hyperspherical geometry~\citep{gosgens2023hyperspherical}.
}




\subsubsection*{Outline}
The remainder of the paper is organized as follows: in Section~\ref{sec:projection-method}, we describe the projection method and the hyperspherical geometry of clusterings. In Section~\ref{sec:equivalences}, we prove that several popular community detection methods are projection methods and discuss the implications of these equivalences. In Section~\ref{sec:algorithms}, we discuss algorithms that can be used to perform the projection step in the projection method. Finally, Section~\ref{sec:experiments} presents methodology for choosing a suitable projection method in given settings. In particular, Section~\ref{sec:granularity} discusses how to modify a query mapping in order to obtain a projection method that detects communities of desired granularity, while Section~\ref{sec:training} demonstrates how we can can perform \emph{hyperparameter tuning} within the projection method.
Our implementation of the projection method and the experiments of Section~\ref{sec:experiments} is available on Github\footnote{The code is available at \url{https://github.com/MartijnGosgens/hyperspherical_community_detection}.}.

\subsubsection*{Notation}
This paper discusses the relations between several different community detection methods, that each come with their own notations. We aim to keep  notation\deleted{s} as consistent as possible, and here we list \added{the} most common notation\deleted{s} that we will use throughout the paper. 

We represent a graph by an $n\times n$ adjacency matrix $A$ with node set $[n]=\{1,\dots,n\}$ and $m$ edges. We denote the degree of node $i$ by $d_i$. We write $\sum_{i<j}$ to denote a sum over all node pairs $i,j\in[n]$ with $i<j$. We denote the number of node pairs by $N={n\choose 2}$.
For a clustering $C$, we define $\intra(C)$ as the set of intra-cluster node pairs, i.e., pairs of nodes $i<j$ that are part of the same cluster according to $C$. Similarly, we define $\inter(C)$ as the set of node pairs that are part of different clusters according to $C$. We define $m_C=|\intra(C)|$ as the number of intra-cluster pairs.
For two functions $f,g$, we write $f(C)\equiv g(C)$ and say that optimizing $f$ is equivalent to optimizing $g$ if, for each pair of clusterings $C_1,C_2$, the inequality $f(C_1)\geq f(C_2)$ holds if and only if  $g(C_1)\geq g(C_2)$.
We denote vectors by bold letters $\vec{x},\vec{y}$ and we denote the inner product between two vectors by $\langle\vec{x},\vec{y}\rangle$. The Euclidean length of a vector is given by $\|\vec{x}\|=\sqrt{\langle\vec{x},\vec{x}\rangle}$.


\section{The projection method}\label{sec:projection-method}
In this section, we describe the hyperspherical geometry that the projection method relies on. For more details, we refer to~\cite{gosgens2023hyperspherical}.
We consider a graph with $n$ nodes and define a clustering as a partition of these nodes. For a clustering $C$, we define the \emph{clustering vector} $\vec{b}(C)$ as the binary vector indexed by the node-pairs, given by
$$
    \vec{b}(C)_{ij}=\begin{cases}
        1,&\text{ if $i$ and $j$ are in the same cluster},\\
        -1,&\text{ if $i$ and $j$ are in different clusters}.
    \end{cases}
$$
Note that the dimension of this vector is $N={n\choose 2}$, and that the Euclidean length of each clustering vector is $\sqrt{N}$, so that they are all located on a hypersphere of radius $\sqrt{N}$ around the origin. Because of this, it is natural to consider the geometry induced by the \emph{angular distance}, given by 
\begin{equation}\label{eq:angular-dist}
d_a(\vec{x},\vec{y})=\arccos\left(\frac{\langle\vec{x},\vec{y}\rangle}{\|\vec{x}\|\cdot\|\vec{y}\|}\right).
\end{equation}
The vector representation $\vec{b}(C)$, together with the angular distance\added{,} defines a hyperspherical geometry of clusterings.

\subsection*{Clustering granularity and latitude}
\noindent The clustering into a single community corresponds to the all-one vector $\vec{b}(C)=\vec{1}$, while the clustering into $n$ singleton communities corresponds to $\vec{b}(C)=-\vec{1}$. These two vectors form opposite \emph{poles} on the hypersphere. The extent to which a clustering resembles the former or latter is referred to as its \emph{granularity}: \emph{fine-grained} clusterings consist of many small communities, while \emph{coarse-grained} clusterings consist of few and large communities.
We measure clustering granularity by the \emph{latitude} of the clustering vector. For $\vec{x}\in\mathbb{R}^N$, the latitude is defined as $\ell(\vec{x})=d_a(\vec{x},-\vec{1})$. For a clustering vector $\vec{b}(C)$, this is given by $\ell(\vec{b}(C))=\arccos\left(1-2\frac{m_C}{N}\right)$, where $m_C=|\intra(C)|$ is the number of intra-cluster pairs of $C$.
Note that the number of intra-cluster pairs is related to the sum of the cluster sizes: let $s_1,\dots,s_k$ be the sizes of the clusters of $C$. Then
\[
m_C=\sum_{i=1}^k{s_i\choose 2}=\frac{1}{2}\sum_{i=1}^ks_i^2 - \frac{n}{2}.
\]
Thus, quantifying clustering granularity by the latitude is equivalent to quantifying it by the sum of squared cluster sizes.

\subsection*{Parallels and meridians}
\noindent Borrowing more terminology from geography, for $\lambda\in[0,\pi]$, we define the \emph{parallel} $\parp_\lambda$ as the set of vectors with latitude $\lambda$.
In particular, we refer to $\parp_{\pi/2}$ as the \emph{equator}, which corresponds to the set of vectors perpendicular to $\vec{1}$.
For a vector $\vec{x}$ that is not a multiple of $\vec{1}$, we define the \emph{parallel projection} $\parp_\lambda(\vec{x})$ as the projection of $\vec{x}$ onto $\parp_\lambda$, and it is given by
\begin{equation}\label{eq:parp}
\parp_\lambda(\vec{x})=\frac{\sin(\lambda)\cdot\sqrt{N}}{\left\| \vec{x}-\frac{\langle\vec{x},\vec{1}\rangle}{N} \right\|}\cdot\left( \vec{x}-\frac{\langle\vec{x},\vec{1}\rangle}{N} \right)-\cos(\lambda)\cdot\vec{1}.
\end{equation}
Similarly, we define the \emph{meridian} of $\vec{x}$ as the one-dimensional line $\{\parp_\lambda(\vec{x}):\lambda\in(0,\pi)\}$.

\subsection*{Correlation distance and clustering similarity}
\noindent \replaced{For three vectors $\vec{x},\vec{y},\vec{r}$, we can measure the angle on the surface of the hypersphere that the line from $\vec{x}$ to $\vec{r}$ makes with the line from $\vec{y}$ to $\vec{r}$. This angle $\angle(\vec{x},\vec{r},\vec{z})$ is given by the hyperspherical variant of the \emph{law of cosines}:
\begin{equation}\label{eq:hypercosines}
\cos\angle(\vec{x},\vec{r},\vec{y})=\frac{\cos d_a(\vec{x},\vec{y}) - \cos d_a(\vec{x},\vec{r})\cos d_a(\vec{y},\vec{r})}{\sin d_a(\vec{x},\vec{y})\sin d_a(\vec{y},\vec{r})}.
\end{equation}
In particular, when we take $\vec{r}=-\vec{1}$, this angle corresponds to the angle between the meridians of $\vec{x}$ and $\vec{y}$.
}{
For two vectors $\vec{x}$ and $\vec{y}$, one can measure the angle that their meridians make at $-\vec{1}$.} This angle turns out to have an interesting interpretation\added{, as stated in Theorem~\ref{th:th1}:}
\begin{theorem}[\citep{gosgens2023hyperspherical}]
\label{th:th1}
    For two vectors $\vec{x}$ and $\vec{y}$ that are not multiples of $\vec{1}$, the angle that their meridians make is equal to the \emph{arccosine of the Pearson correlation} between $\vec{x}$ and $\vec{y}$.
\end{theorem}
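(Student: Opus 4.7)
The plan is to apply the hyperspherical law of cosines (equation~\eqref{eq:hypercosines}) with $\vec{r}=-\vec{1}$. Since, by definition, each meridian is the set $\{\parp_\lambda(\vec{x}):\lambda\in(0,\pi)\}$, which is a great semicircle from $-\vec{1}$ to $\vec{1}$ passing through $\vec{x}$'s equatorial projection, the geodesic arc from $-\vec{1}$ to $\vec{x}$ lies on the meridian of $\vec{x}$. Therefore $\angle(\vec{x},-\vec{1},\vec{y})$ is precisely the angle the two meridians make at the pole $-\vec{1}$, and it suffices to show that the right-hand side of~\eqref{eq:hypercosines} simplifies to the Pearson correlation between $\vec{x}$ and $\vec{y}$.

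Next, I would introduce the shorthand $\bar{x}=\langle\vec{x},\vec{1}\rangle/N$ for the mean of the entries of $\vec{x}$, and similarly $\bar{y}$. Using $\|\vec{1}\|=\sqrt{N}$ and the definition~\eqref{eq:angular-dist}, I would compute
\[
\cos d_a(\vec{x},-\vec{1}) = -\frac{\sqrt{N}\,\bar{x}}{\|\vec{x}\|},\qquad \sin d_a(\vec{x},-\vec{1}) = \frac{\|\vec{x}-\bar{x}\vec{1}\|}{\|\vec{x}\|},
\]
where the sine formula follows from the Pythagorean decomposition $\vec{x}=\bar{x}\vec{1}+(\vec{x}-\bar{x}\vec{1})$ into orthogonal components, giving $\|\vec{x}\|^2 - N\bar{x}^2 = \|\vec{x}-\bar{x}\vec{1}\|^2$. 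Analogous identities hold for $\vec{y}$, and of course $\cos d_a(\vec{x},\vec{y})=\langle\vec{x},\vec{y}\rangle/(\|\vec{x}\|\|\vec{y}\|)$.

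Substituting into~\eqref{eq:hypercosines}, the numerator becomes $(\langle\vec{x},\vec{y}\rangle - N\bar{x}\bar{y})/(\|\vec{x}\|\|\vec{y}\|)$ and the denominator becomes $\|\vec{x}-\bar{x}\vec{1}\|\cdot\|\vec{y}-\bar{y}\vec{1}\|/(\|\vec{x}\|\|\vec{y}\|)$. The factor $\|\vec{x}\|\|\vec{y}\|$ cancels, and the centered-inner-product identity $\langle\vec{x},\vec{y}\rangle - N\bar{x}\bar{y} = \langle\vec{x}-\bar{x}\vec{1},\vec{y}-\bar{y}\vec{1}\rangle$ (a direct expansion) then exposes the ratio as the cosine similarity of the centered vectors, which is exactly the Pearson correlation. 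Taking the arccosine finishes the proof.

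The main conceptual point to get right is the identification of the angle in~\eqref{eq:hypercosines} with the angle between meridians at $-\vec{1}$; once that is in place, the remainder is a mechanical but clean algebraic simplification. An alternative route, which I would mention only if space permits, is to differentiate the parametrization~\eqref{eq:parp} in $\lambda$ at $\lambda=0$ to read off the tangent vectors of the meridians at $-\vec{1}$, which are proportional to $\vec{x}-\bar{x}\vec{1}$ and $\vec{y}-\bar{y}\vec{1}$; the cosine of their angle is then manifestly the Pearson correlation, bypassing~\eqref{eq:hypercosines} altogether.
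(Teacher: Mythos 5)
Your argument is correct, but note that the paper itself offers no proof of Theorem~\ref{th:th1} to compare against: the result is imported from the cited reference \citep{gosgens2023hyperspherical}, so your derivation is a self-contained reconstruction rather than a variant of an in-text argument. The computation checks out: $\cos d_a(\vec{x},-\vec{1})=-\sqrt{N}\bar{x}/\|\vec{x}\|$, the Pythagorean identity $\|\vec{x}\|^2-N\bar{x}^2=\|\vec{x}-\bar{x}\vec{1}\|^2$, and the centering identity $\langle\vec{x},\vec{y}\rangle-N\bar{x}\bar{y}=\langle\vec{x}-\bar{x}\vec{1},\vec{y}-\bar{y}\vec{1}\rangle$ are all right, and the ratio collapses to the cosine similarity of the centered vectors, i.e.\ the Pearson correlation. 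Two remarks. First, the denominator of~\eqref{eq:hypercosines} as printed reads $\sin d_a(\vec{x},\vec{y})\sin d_a(\vec{y},\vec{r})$; the correct law of cosines has $\sin d_a(\vec{x},\vec{r})\sin d_a(\vec{y},\vec{r})$ (consistent with~\eqref{eq:correlation-dist}), and your substitution silently uses the corrected form --- you should say so explicitly rather than leave the reader to reconcile your denominator with the displayed formula. Second, your appeal to the law of cosines rests on the unproved geometric claim that the geodesic arc from $-\vec{1}$ to (the radial projection of) $\vec{x}$ lies on the meridian of $\vec{x}$; this is true because~\eqref{eq:parp} parametrizes the great circle through $\pm\vec{1}$ and the normalization of $\vec{x}$, but it deserves a sentence. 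Your alternative route --- differentiating~\eqref{eq:parp} in $\lambda$ at the pole to obtain tangent vectors proportional to $\vec{x}-\bar{x}\vec{1}$ and $\vec{y}-\bar{y}\vec{1}$ --- is actually the cleaner proof: it makes the identification of ``the angle the meridians make'' rigorous by definition of the angle between curves, and it bypasses both the law of cosines and the typo issue entirely. I would promote it from a footnote to the main argument.
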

Because of Theorem~\ref{th:th1}, we call this angle the \emph{correlation distance} between $\vec{x}$ and $\vec{y}$. \replaced{Note that $d_a(\vec{x},-\vec{1})=\ell(\vec{x})$, so that the correlation distance is given by}{The correlation distance can be computed from $d_a(\vec{x},\vec{y})$ and the latitudes $\ell(\vec{x}),\ell(\vec{y})$ by the following formula}
\begin{equation}\label{eq:correlation-dist}
    \dCC(\vec{x},\vec{y})=\arccos\left(\frac{\cos d_a(\vec{x},\vec{y})-\cos\ell(\vec{x})\cos\ell(\vec{y})}{\sin\ell(\vec{x})\sin\ell(\vec{y})}\right),
\end{equation}
and the Pearson correlation between vectors $\vec{x}$ and $\vec{y}$ is thus given by $\cos\dCC(\vec{x},\vec{y})$.
The correlation coefficient between two clustering vectors $\vec{b}(C),\vec{b}(T)$ turns out to be a useful quantity for measuring the similarity between clusterings $C$ and $T$~\citep{gosgens2021systematic}.
There exist many measures to quantify the similarity between two clusterings, but most of these measures suffer from the defect that they are biased towards either coarse- or fine-grained clusterings~\citep{vinh2009information,lei2017ground}. The Pearson correlation between two clustering vectors does not suffer from this bias, and additionally satisfies many other desirable properties~\citep{gosgens2021systematic}.
Because of that, we will use the Pearson correlation to measure the similarity between clusterings. For clusterings $C$ and $T$, the Pearson correlation is given by
\begin{equation}\label{eq:pearson-correlation}
\rho(C,T)=\cos\dCC(\vec{b}(C),\vec{b}(T))=\frac{m_{CT}\cdot N-m_C\cdot m_T}{\sqrt{m_C\cdot (N-m_C)\cdot m_T\cdot(N-m_T)}},
\end{equation}
where $m_C=|\intra(C)|, m_T=|\intra(T)|$ and $m_{CT}=|\intra(C)\cap\intra(T)|$.
In cases where $C$ and $T$ correspond to the detected and planted (i.e., ground-truth) clusterings, we use $\rho(C,T)$ as a measure of the performance of the community detection.

\subsection*{Query mappings}
\noindent Above, we have defined the hyperspherical geometry of clusterings. This geometry comes with natural measures for clustering granularity (the latitude $\ell$) and similarity between clusterings (the correlation $\rho$). The idea behind the projection method is that we map a graph to a point in this same geometry, and then find the clustering vector that is closest to that point. For a graph with adjacency matrix $A$, we denote the vector that it is mapped to by $\vec{q}(A)\in\mathbb{R}^N$, and refer to it as the \emph{query vector} of $A$. We refer to $\vec{q}(\cdot)$ as the \emph{query mapping}. The name `query' comes from the fact that in the second step of the projection method, we search for the clustering vector $\vec{b}(C)$ that minimizes $d_a(\vec{q}(A),\vec{b}(C))$. That is, among the set of clustering vectors, we find the one that is \emph{nearest} to $\vec{q}(A)$. In short, we arrive at the following definition of  the projection method:
\begin{definition}
    A community detection method is a \emph{projection method} if it can be described
by the following two-step approach: \added{(}1) the graph with adjacency matrix $A$ is first mapped to a \emph{query vector} $\vec{q}(A)$; and \added{(}2) the query vector is projected to the set of clustering vectors by minimizing $d_a(\vec{q}(A),\vec{b}(C))$ over the set of clusterings $C$.
\end{definition}

There exist infinitely many ways to map graphs to query vectors. One of the simplest way\replaced{s}{,} is to simply turn the adjacency matrix into a vector like $\vec{q}(A)_{ij}=\tfrac{1}{2}\left(A_{ij}+A_{ji}\right)$, where the average is taken in case $A$ is directed. In general, we define the \emph{half-vectorization} of a matrix $X\in\mathbb{R}^{n\times n}$ by
\[
\vec{v}(X)_{ij}=\frac{1}{2}\left(X_{ij}+X_{ji}\right).
\]
The vector $\vec{v}(A^r)$ counts the number of paths of length $r$ between each pair of vertices. In particular, $\vec{v}(A^2)_{ij}$ corresponds to the number of neighbors that the nodes $i$ and $j$ share.

There are many more ways in which we can construct a query vector based on $A$. For example, the entry $\vec{q}(A)_{ij}$ can also depend on the degrees of $i$ and $j$ or even the length of the shortest path between $i$ and $j$.

Finally, because we are minimizing the angular distance, the length of $\vec{q}(A)$ is not relevant. It may be natural to normalize all vectors to a Euclidean length $\sqrt{N}$ so that they have the same length as clustering vectors, but we will not do so to avoid cluttering the notation.

\added{In summary, the hyperspherical geometry comes with three key measures: firstly, the \emph{angular distance} $d_a(\vec{q}(A),\vec{b}(C))$ is the quality measure that we minimize in order to detect communities. Secondly, the \emph{latitude} measures the granularity of a clustering. That is, $\ell(\vec{b}(T))$ measures the granularity of the planted clustering, while $\ell(\vec{b}(C))$ measures the granularity of the detected clustering. Thirdly, the \emph{correlation distance} between the planted and detected communities $\dCC(\vec{b}(C),\vec{b}(T))$ (or its cosine, $\rho(C,T)=\cos\dCC(\vec{b}(C),\vec{b}(T))$) measures the performance of the detection. In Section~\ref{sec:granularity}, we will additionally see that $\dCC(\vec{q}(A),\vec{b}(T))$ is a useful measure. The angular distance, correlation distance and latitude are related by the \emph{hyperspherical law of cosines}, given in~\eqref{eq:correlation-dist}.}



\section{Equivalences to other community detection methods}\label{sec:equivalences}
In this section, we will prove that the class of projection methods generalizes several community detection methods. We prove that the class of projection methods is equivalent to correlation clustering, and we prove that the remaining methods are subclasses of the class of projection methods. For each of the related clustering and community detection methods, we will provide the query mapping of the corresponding projection method. The relations between the methods that are discussed here are illustrated in Figure~\ref{fig:equivalences}.

\begin{figure}
    \centering
    \includegraphics[width=0.6\textwidth]{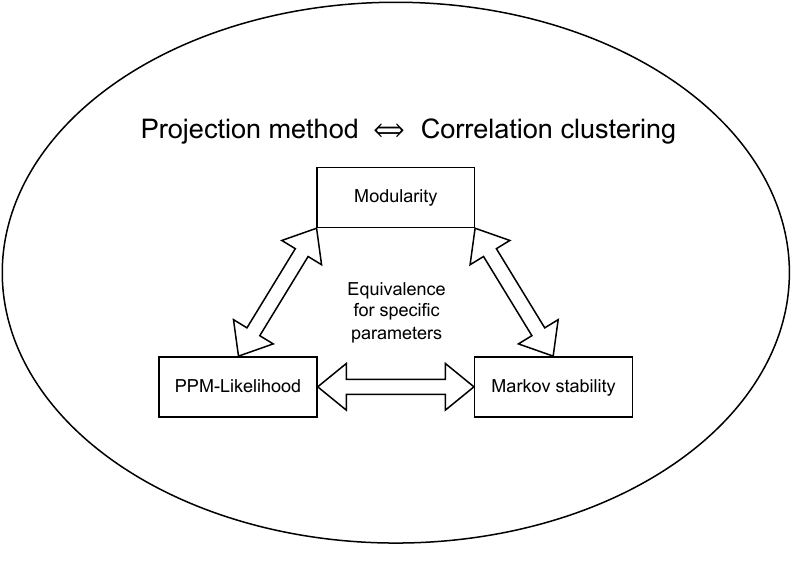}
    \caption{A schematic overview of the equivalences between the community detection and clustering methods that are described in Section~\ref{sec:equivalences}. The projection method is completely equivalent to correlation clustering, while modularity, PPM-likelihood and Markov stability are subsets. For certain parameter choices, these latter three methods are equivalent.}
    \label{fig:equivalences}
\end{figure}

\subsection{Correlation clustering}
Correlation clustering~\citep{bansal2004correlation} is a framework for clustering where pairwise similarity and dissimilarity values $w_{ij}^+$ and $w_{ij}^-$ are given for every pair of objects $i,j$. The objective is to maximize the similarity within the clusters and the dissimilarity between the clusters. Such {\it agreement} of a clustering $C$ with weights $w_{ij}^{\pm}$ is expressed as
\[
\textrm{CorClust}_{\max}(C;w^+,w^-)=
\sum_{ij\in\intra(C)}w_{ij}^++\sum_{ij\in\inter(C)}w_{ij}^-.
\]
Correlation clustering solves the maximization problem 
\begin{equation}
    \label{eq:CorClustMax}
\max_{C}\textrm{CorClust}_{\max}(C;w^+,w^-).    
\end{equation}
Equivalently, one can express the {\it disagreement} of a clustering $C$ with weights $w_{ij}^{\pm}$  as 
\begin{align*}
\textrm{CorClust}_{\min}(\mathcal{C}(w^+,w^-);w^+,w^-)&=\sum_{ij\in\intra(C)}w_{ij}^-+\sum_{ij\in\inter(C)}w_{ij}^+\\
&=\sum_{i<j}(w_{ij}^++w_{ij}^-)-\textrm{CorClust}_{\max}(\mathcal{C}(w^+,w^-);w^+,w^-).
\end{align*}
Then \eqref{eq:CorClustMax} can be stated as a minimization problem
\begin{equation}
    \label{eq:CorClustMin}
\min_{C}\textrm{CorClust}_{\min}(C;w^+,w^-).    
\end{equation}
Somewhat counter-intuitively, the equivalent formulations \eqref{eq:CorClustMax} and \eqref{eq:CorClustMin} lead to different approximation results.  Indeed, for minimization problems \eqref{eq:CorClustMin}, an $\alpha$-approximation guarantee means that for a given  $\alpha> 1$, an optimization algorithm $\mathcal{C}$ satisfies the condition 
\begin{equation}\label{eq:min-approx}
\textrm{CorClust}_{\min}(\mathcal{C}(w^+,w^-);w^+,w^-)\leq \alpha\cdot\textrm{CorClust}_{\min}(C^*;w^+,w^-), \quad \mbox{
for any $w^+,w^-$},\end{equation}
where $C^*$ solves \eqref{eq:CorClustMin}. On the other hand, for the maximization problem \eqref{eq:CorClustMax}, a $\beta$-approximation guarantee means that for a given $\beta<1$,  the optimization algorithm $\mathcal{C}$ satisfies the condition 
\begin{equation}\label{eq:max-approx}
\textrm{CorClust}_{\max}(\mathcal{C}(w^+,w^-);w^+,w^-)\geq \beta\cdot\textrm{CorClust}_{\min}(C^*;w^+,w^-), \quad \mbox{for any 
$w^+,w^-$}.\end{equation}
Now, suppose, we have found an $\alpha$-approximation in \eqref{eq:min-approx}. Then\deleted{, we have}
\begin{align*}
\textrm{CorClust}_{\max}(\mathcal{C}(w^+,w^-);w^+,w^-)
&=\sum_{i<j}(w_{ij}^++w_{ij}^-)-\textrm{CorClust}_{\min}(\mathcal{C}(w^+,w^-);w^+,w^-)\\
&\geq \sum_{i<j}(w_{ij}^++w_{ij}^-)-\alpha\cdot\textrm{CorClust}_{\min}(C^*;w^+,w^-)\\
&=\alpha\cdot\textrm{CorClust}_{\max}(C^*;w^+,w^-)-(\alpha-1)\cdot \sum_{i<j}(w_{ij}^++w_{ij}^-),
\end{align*}
which gives \emph{some} approximation guarantee for the maximization problem, but not  of the same multiplicative form as~\eqref{eq:max-approx}.

The motivation for correlation clustering originates from the setting where we are given a noisy classifier that, for each pair of objects, predicts whether they should be clustered together or apart~\citep{bansal2004correlation}. This leads to the simple $\pm 1$ version of correlation clustering, where  $w_{ij}^+,w_{ij}^-\in\{0,1\}$ and $w_{ij}^++w_{ij}^-=1$ (or equivalently, $w_{ij}^+-w_{ij}^-=\pm1$) holds for each pair $ij$.
The best known approximation guarantee for the minimization formulation of the $\pm1$ variant is $2.06$, which is achieved by rounding the solution from an LP relaxation~\citep{chawla2015near}.

For the general case where the weights are unconstrained, it is known that maximizing agreement is APX-hard~\citep{charikar2005clustering}, which means that any constant-factor approximation is NP-hard. The same authors do provide a 0.7666-approximation algorithm by rounding the semi-definite programming solution. Note that this does not contradict the APX-hardness result, as semi-definite programming is also NP-hard.

We now prove that correlation clustering corresponds to a projection method:

\begin{lemma}\label{lem:correlation-clustering}
    Correlation clustering with similarity and dissimilarity values $w^+=(w_{ij}^+)_{i<j},w^-=(w_{ij}^-)_{i<j}$ is equivalent to a projection method with query vector $\vec{q}^\sss{(CC)}$ given by
    \[
        \vec{q}^\sss{(CC)}_{ij}=w_{ij}^+-w_{ij}^-.
    \]
\end{lemma}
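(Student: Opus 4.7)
The plan is to reduce the projection method's objective (minimizing $d_a(\vec{q}^\sss{(CC)},\vec{b}(C))$) to the correlation-clustering objective by showing that the two rankings of clusterings coincide. Since every clustering vector has the same Euclidean norm $\|\vec{b}(C)\|=\sqrt{N}$, and $\vec{q}^\sss{(CC)}$ does not depend on $C$, the definition of $d_a$ in~\eqref{eq:angular-dist} implies that minimizing $d_a(\vec{q}^\sss{(CC)},\vec{b}(C))$ over $C$ is equivalent to maximizing the inner product $\langle\vec{q}^\sss{(CC)},\vec{b}(C)\rangle$. So it suffices to show that this inner product and $\textrm{CorClust}_{\max}(C;w^+,w^-)$ differ by an affine transformation with positive slope, independent of $C$.

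Next, I would expand the inner product by splitting the sum over node pairs into intra- and inter-cluster contributions, using that $\vec{b}(C)_{ij}=+1$ on $\intra(C)$ and $-1$ on $\inter(C)$:
\[
\langle\vec{q}^\sss{(CC)},\vec{b}(C)\rangle
=\sum_{ij\in\intra(C)}(w_{ij}^+-w_{ij}^-)-\sum_{ij\in\inter(C)}(w_{ij}^+-w_{ij}^-).
\]
Rearranging the four resulting sums yields $\langle\vec{q}^\sss{(CC)},\vec{b}(C)\rangle=\textrm{CorClust}_{\max}(C;w^+,w^-)-\textrm{CorClust}_{\min}(C;w^+,w^-)$.

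Then I invoke the identity already used in the paper, $\textrm{CorClust}_{\max}(C;w^+,w^-)+\textrm{CorClust}_{\min}(C;w^+,w^-)=\sum_{i<j}(w_{ij}^++w_{ij}^-)$, which is a constant depending only on the weights. Eliminating $\textrm{CorClust}_{\min}$ gives
\[
\langle\vec{q}^\sss{(CC)},\vec{b}(C)\rangle=2\,\textrm{CorClust}_{\max}(C;w^+,w^-)-\sum_{i<j}(w_{ij}^++w_{ij}^-).
\]
Because the subtracted term does not depend on $C$ and the coefficient of $\textrm{CorClust}_{\max}$ is positive, the rankings of clusterings induced by the two objectives coincide, which by the $\equiv$-convention of the paper establishes the claimed equivalence.

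No real obstacle is expected: the only thing to be careful about is the accounting of the $\pm1$ signs coming from $\vec{b}(C)$ and distinguishing the rearrangement that produces $\textrm{CorClust}_{\max}-\textrm{CorClust}_{\min}$ from a naive splitting, so that the constant term is correctly identified as $C$-independent. The degenerate case where $\vec{q}^\sss{(CC)}$ happens to be proportional to $\pm\vec{1}$ (e.g., all weights equal) can be handled by noting that in that case every clustering is optimal for both formulations, so the equivalence remains vacuously true.
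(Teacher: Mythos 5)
Your proposal is correct and follows essentially the same route as the paper's proof: both reduce minimizing $d_a(\vec{q}^\sss{(CC)},\vec{b}(C))$ to maximizing the inner product $\langle\vec{q}^\sss{(CC)},\vec{b}(C)\rangle$ and then show this equals $2\,\textrm{CorClust}_{\max}(C;w^+,w^-)$ minus the $C$-independent constant $\sum_{i<j}(w_{ij}^++w_{ij}^-)$; your detour through $\textrm{CorClust}_{\max}-\textrm{CorClust}_{\min}$ is just a cosmetic rearrangement of the paper's direct term-collection. The added remark on the degenerate all-equal-weights case is a harmless bonus the paper omits.
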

\begin{proof}
    Minimizing $d_a(\vec{q}^\sss{(CC)},\vec{b}(C))$ is equivalent to maximizing $\langle \vec{q}^\sss{(CC)},\vec{b}(C)\rangle=\|\vec{q}^\sss{(CC)}\|\cdot\sqrt{N}\cdot\arccos d_a(\vec{q}^\sss{(CC)},\vec{b}(C))$. We prove that this, in turn, is equivalent to  maximizing $\textrm{CorClust}_{\max}(C;w^+,w^-)$  w.r.t. to the given values $(w_{ij}^+)_{i<j}$ and $(w_{ij}^-)_{i<j}$:
    \begin{align*}
        \langle \vec{q}^\sss{(CC)},\vec{b}(C)\rangle
        &=\sum_{ij\in\intra(C)}(w_{ij}^+-w_{ij}^-)+\sum_{ij\in\inter(C)}(w_{ij}^--w_{ij}^+)\\
        &=2\cdot\left(\sum_{ij\in\intra(C)}w_{ij}^++\sum_{ij\in\inter(C)}w_{ij}^-\right)-\sum_{i<j}(w_{ij}^++w_{ij}^-)\\
        &=2\cdot\textrm{CorClust}_{\max}(C;w^+,w^-)-\sum_{i<j}(w_{ij}^++w_{ij}^-).
    \end{align*}
    The last term does not depend on $C$, so we obtain that indeed maximizing $\langle \vec{q}^\sss{(CC)},\vec{b}(C)\rangle$ is equivalent to maximizing $\textrm{CorClust}_{\max}(C;w^+,w^-)$, as required.
\end{proof}

It is easy to see that any query vector $\vec{q}$ can also be turned into a correlation clustering objective by taking $w_{ij}^+=\max\{0,\vec{q}_{ij}\}$ and $w_{ij}^-=\max\{0,-\vec{q}_{ij}\}$. This tells us that correlation clustering and projection methods are \emph{equivalent}, in the sense that any correlation clustering instance can be mapped to a query vector and vice versa. Note, however, that both classes come with different invariances:  correlation clustering is invariant to applying the same linearly increasing transformation $f(x)=a+bx$ to all values $w_{ij}^+$ and $w_{ij}^-$, where $a,b\in\mathbb{R}$ and $b>0$.
Similarly, projection methods are invariant to multiplying the query vector by any positive constant, due to the hyperspherical geometry.

The equivalence between correlation clustering and projection methods gives a new interpretation for correlation clustering in terms of hyperspherical geometry, and allows one to transfer hardness results from correlation clustering to projection methods.

\subsubsection*{Correlation clustering vs. community detection\deleted{.}}
While the fields of community detection and correlation clustering are similar, the focus of the two fields is notably different: correlation clustering studies clustering from an algorithmic viewpoint, where the goal is to design algorithms with provable optimization guarantees with respect to the correlation clustering quality function. In contrast, community detection focuses mainly on the choice of the quality function, with the aim to obtain a meaningful clustering into communities. In this context, `meaningful' can mean either statistically significant, or similar to some ground truth clustering. In summary, community detection asks ``What quality function to optimize?" while correlation clustering asks ``What algorithm is best for optimizing the correlation clustering quality function?".

\cite{veldt2018correlation} introduced an interesting variant of correlation clustering methods known as \emph{LambdaCC}, and showed that it is related to Sparsest Cut, Normalized Cut and Cluster Deletion.
They additionally introduce a degree-corrected variant of LambdaCC and prove that it is equivalent to CL-modularity, which we discuss in Section~\ref{sec:modularity}.

\subsection{Markov stability}
The \emph{Markov stability}~\citep{delvenne2010stability} of a clustering with respect to a network quantifies how likely a random walker is to find itself in the same community at the beginning and end of some time interval. When communities are clearly present in a network, then a random walker will tend to stay inside communities for long time periods and travel between communities infrequently. Markov stability can be defined for various types of discrete- or continuous-time random walks~\citep{lambiotte2014random}. 
Let $P(t)_{ij}$ be the probability that the random walker is at node $j$ at time $t$ if it was at node $i$ at time $0$, and denote by $P(t)=(P(t)_{ij})\in\mathbb{R}^{n\times n}$ the $t$-transition matrix of the random walk. For simplicity, in this paper, we only consider discrete time $t=0,1,\ldots$, thus $P(t)=P^t$, where $P=P(1)$. We assume that $P$ is irreducible and aperiodic, so that the random walk has a unique stationary distribution $\vec{s}\in\mathbb{R}^{n}$, which is the unique solution to $\vec{s}=\vec{s}P$, such that all elements of $\vec{s}$ sum up to one. We consider a random walker starting from the initial state sampled from $\vec{s}$, and compare the distribution of its location at time $t$, to another location sampled from the stationary distribution. Markov stability measures the covariance between the community label indicators before and after the interval $t$.

Formally, let $C$ be a clustering and let the clusters be numbered $1,2,\dots,k$, where $k$ is the number of clusters in $C$.
We denote by $H(C)$ the $n\times k$ indicator matrix of the clustering $C$, where $H(C)_{ia}=1$ if node $i$ belongs to the $a$-th cluster, and $H(C)_{ia}=0$ otherwise. 
Markov stability is defined as
\begin{equation}
\label{eq:MarkovStability}
\text{MarkovStability}(C,P(t))=\text{Trace}\left( 
    H(C)^\top \left(\text{diag}(\vec{s})P(t)-\vec{s}^\top \vec{s}\right)H(C)
\right).    
\end{equation}

For more details on the definition of Markov stability and its variants, we refer to~\cite{delvenne2010stability} and~\cite{lambiotte2021modularity}.
We show that maximizing Markov stability is a projection method with respect to the query vector
\begin{equation}\label{eq:markov-vector}
    \vec{q}^\sss{(MS)}_t=\vec{v}(\text{diag}(\vec{s})P(t)-\vec{s}^\top \vec{s}),
\end{equation}
where $\vec{v}(X)\in\mathbb{R}^N$ is the \emph{half-vectorization} of the matrix $X$, defined by $\vec{v}(X)_{ij}=\tfrac{1}{2}(X_{ij}+X_{ji})$ for $i<j$. We show that for \emph{any} matrix $X\in\mathbb{R}^{n\times n}$, maximizing $\text{Trace}(H(C)^\top X H(C))$  is a projection method\replaced{:}{.}
\begin{lemma}\label{lem:trace}
    For any matrix $X\in\mathbb{R}^{n\times n}$, maximizing the trace
    \begin{equation}\label{eq:trace-maximization}
        \text{Trace}(H(C)^\top X H(C))
    \end{equation}
    over the set of all clusterings $C$ is equivalent to minimizing 
    $
    d_a\left(\vec{v}(X),\vec{b}(C)\right)
    $.
\end{lemma}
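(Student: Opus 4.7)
The plan is to reduce the trace to an inner product between $\vec{v}(X)$ and $\vec{b}(C)$ modulo quantities that do not depend on $C$, and then invoke the fact that minimizing the angular distance on the hypersphere is equivalent to maximizing the inner product when both norms are constant in $C$.

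First I would expand the trace directly using the definition of $H(C)$. For each cluster index $a$, the diagonal entry $(H(C)^\top X H(C))_{aa}$ equals $\sum_{i,j} H(C)_{ia} X_{ij} H(C)_{ja}$, so summing over $a$ gives
\[
\text{Trace}(H(C)^\top X H(C)) = \sum_{i,j} X_{ij} \, \mathbb{1}[i \sim_C j],
\]
where $i \sim_C j$ means that $i$ and $j$ are in the same cluster of $C$. The contribution from the diagonal entries $X_{ii}$ is constant in $C$ (since $i \sim_C i$ trivially), and the off-diagonal part can be collapsed into a sum over unordered pairs by writing $X_{ij}+X_{ji} = 2\vec{v}(X)_{ij}$ for $i<j$.

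Next I would use the identity $\mathbb{1}[i \sim_C j] = \tfrac{1}{2}(1 + \vec{b}(C)_{ij})$ for $i<j$ to obtain
\[
\text{Trace}(H(C)^\top X H(C)) = \sum_{i} X_{ii} + \sum_{i<j} \vec{v}(X)_{ij} + \langle \vec{v}(X), \vec{b}(C) \rangle.
\]
The first two terms are independent of $C$, so maximizing the trace over $C$ is equivalent to maximizing $\langle \vec{v}(X), \vec{b}(C)\rangle$.

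Finally, I would conclude using the definition~\eqref{eq:angular-dist} of the angular distance. Since $\|\vec{b}(C)\| = \sqrt{N}$ for every clustering and $\|\vec{v}(X)\|$ does not depend on $C$, we have
\[
\cos d_a(\vec{v}(X), \vec{b}(C)) = \frac{\langle \vec{v}(X), \vec{b}(C)\rangle}{\|\vec{v}(X)\|\cdot\sqrt{N}},
\]
so maximizing $\langle \vec{v}(X), \vec{b}(C)\rangle$ is equivalent (by monotonicity of $\arccos$) to minimizing $d_a(\vec{v}(X),\vec{b}(C))$, as required. The whole argument is essentially a bookkeeping calculation; the only minor subtlety is handling the diagonal of $X$ and the symmetrization that defines $\vec{v}$, but neither of these affects the optimization over $C$, so there is no real obstacle.
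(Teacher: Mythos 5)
Your proposal is correct and follows essentially the same route as the paper's proof: expand the trace via $\sum_a H_{ia}H_{ja}=\mathbb{1}[i\sim_C j]$, strip off the $C$-independent diagonal and symmetrization terms, reduce to maximizing $\langle\vec{v}(X),\vec{b}(C)\rangle$, and conclude via the constancy of $\|\vec{b}(C)\|=\sqrt{N}$ and monotonicity of $\arccos$. The only difference is cosmetic bookkeeping (you use the identity $\mathbb{1}[i\sim_C j]=\tfrac{1}{2}(1+\vec{b}(C)_{ij})$ where the paper subtracts the constant $\tfrac{1}{2}\sum_{i<j}(X_{ij}+X_{ji})$), which changes nothing of substance.
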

\begin{proof}
    The trace is written as 
    \[
    \text{Trace}(H(C)^\top X H(C))=\sum_{a}\left[H(C)^\top X H(C)\right]_{aa}.
    \]
    We write 
    \begin{equation}
    \label{eq:MarkovStability-1}
    \left[H(C)^\top X H(C)\right]_{aa}=
    \sum_i\sum_j H(C)_{ia}X_{ij}H(C)_{ja}.    
    \end{equation}    
    Now, note that $H(C)_{ia}H(C)_{ja}=1$ whenever $i$ and $j$ are both in community $a$, and $H(C)_{ia}H(C)_{ja}=0$ otherwise.
    Therefore, summing over $a$, we get $\sum_aH(C)_{ia}H(C)_{ja}=1$ if $ij\in\intra(C)$, $ji\in\intra(C)$ or $i=j$.
    Hence, 
    \[
    \text{Trace}(H(C)^\top X H(C))=\sum_{a}\left[H(C)^\top X H(C)\right]_{aa}=\sum_{ij\in\intra(C)}(X_{ij}+X_{ji})+\sum_{i\in[n]}X_{ii}.
    \]
    Note that the sum $\sum_{i\in[n]}X_{ii}$ does not depend on $C$, so that omitting it will not affect the optimization. In addition, we can subtract $\tfrac{1}{2}\sum_{i<j}(X_{ij}+X_{ji})$, which is also constant w.r.t. $C$. We obtain
    \[
    \text{Trace}(H(C)^\top X H(C))\equiv \sum_{ij\in\intra(C)}\tfrac{1}{2}(X_{ij}+X_{ji})-\sum_{ij\in\inter(C)}\tfrac{1}{2}(X_{ij}+X_{ji})=\langle\vec{v}(X),\vec{b}(C)\rangle.
    \]
    To conclude, trace maximization is equivalent to maximizing $\langle\vec{v}(X),\vec{b}(C)\rangle$, which is equivalent to minimizing $d_a(\vec{v}(X),\vec{b}(C))$ over the set of clusterings $C$.
\end{proof}
It is known~\citep{delvenne2010stability} that for a discrete-time Markov chain and $t=1$, Markov stability is equivalent to CL-modularity maximization with $\gamma=1$\added{, which we define in Section~\ref{sec:modularity}}.
The time parameter $t$ controls the granularity of the detected communities. 
When $t=0$, we get communities of size $1$, because $P^0=I$, so that $\left(\text{diag}(\vec{s})P^t-\vec{s}^\top \vec{s}\right)_{ij}=-\vec{s}_i\vec{s}_j<0$ for all $i\neq j$, therefore \eqref{eq:MarkovStability} is maximized when in \eqref{eq:MarkovStability-1} we have $H_{ia}H_{ja}=0$ for all $a$. 
Furthermore, \cite{delvenne2010stability} show that in the limit $t\to\infty$, Markov stability in continuous time and with normalized Laplacian instead of the matrix $P$, divides the network in two communities, corresponding to the positive and the negative coordinates of the \emph{Fiedler vector}, that is the eigenvector corresponding to the second smallest eigenvalue of the normalized Laplacian.

\subsubsection*{Matrix vs. vector representation}
Note that the vector and matrix representations of clusterings are related by $\vec{b}(C)=2\vec{v}(H(C)H(C)^\top)-\vec{1}$.
This relation makes it possible to re-define the hyperspherical geometry from Section~\ref{sec:projection-method}\deleted{,} entirely in terms of $n\times n$ matrices instead of ${n\choose 2}$-dimensional vectors. However, we refrain from doing so, because we believe that in most cases, the vector representations are easier to work with. To illustrate, the matrix formulation of projection methods\deleted{,} amounts to replacing the \emph{query vector} $\vec{q}$ with a \emph{query matrix} $Q\in\mathbb{R}^{n\times n}$. Note that $Q$ has $n^2$ entries $i,j\in [n]$,  while $\vec{q}$ has only ${n\choose 2}$ entries $i<j$, but these extra entries \replaced{do not}{don't} carry any additional information. Indeed, concerning the off-diagonal elements $i>j$, it does not matter whether $Q$ is symmetric or not, because $\text{Trace}(H(C)^\top Q H(C))=\tfrac{1}{2}\text{Trace}(H(C)^\top (Q+Q^\top) H(C))$ for any $Q\in\mathbb{R}^{n\times n}$. Furthermore, the diagonal entries $i=j$ play no role because adding any diagonal matrix $\text{diag}(\vec{y})$, for $\vec{y}\in \mathbb{R}^n$,  to $Q$, does not affect the optimization:
\[
\text{Trace}(H(C)^\top (Q+\text{diag}(\vec{y})) H(C))=\text{Trace}(H(C)^\top Q H(C))+\sum_{i\in[n]}\vec{y}_i\equiv\text{Trace}(H(C)^\top Q H(C)). 
\]
The vector representation has the advantage that it omits these unimportant values, which is why we use query vectors instead of query matrices. Nevertheless, the matrix representation allows for an easier analysis in some settings. For example, in~\cite{liu2018geometric}, the spectral properties of Markov stability are leveraged to create an optimization algorithm.

\subsection{Modularity}\label{sec:modularity}
Modularity maximization is one of the most widely-used community detection methods~\citep{newman2004finding}. Modularity measures the excess of edges inside communities, compared to a null model; a random graph model without community structure. This null model is usually either the Erd\H{o}s-R\'{e}nyi (ER) model or the Chung-Lu (CL, \cite{chung2001diameter}) model. Modularity comes with a \emph{resolution parameter} that controls the granularity of the detected clustering. For the ER and CL null models, modularity is given by
\begin{align*}
    \textrm{CLM}(C;A,\gamma)=\frac{1}{2m}\sum_{ij\in\intra(C)}\left(A_{ij}-\gamma\frac{d_id_j}{2m}\right),\quad
    \textrm{ERM}(C;A,\gamma)=\frac{1}{2m}\sum_{ij\in\intra(C)}A_{ij}-\gamma\frac{m}{N},
\end{align*}
where we recall that $d_i$ is the degree of node $i$, $m=\tfrac{1}{2}\sum_{i\in[n]}d_i$ is the number of edges, and $N={n\choose 2}$ is the number of node pairs.

In~\cite{gosgens2023hyperspherical}, we have proven that modularity maximization is a projection method.
In addition, the equivalence between modularity maximization and correlation clustering was already proven by~\cite{veldt2018correlation}. Hence, Lemma~\ref{lem:correlation-clustering}, too, establishes that modularity maximization is a projection method. Finally, \cite{newman2006finding} shows that modularity can be written in a similar trace-maximization form as~\eqref{eq:trace-maximization}, which additionally allows one to use Lemma~\ref{lem:trace} to prove that modularity maximization is a projection method. Either way, we get the following query vectors:
\[
    \vec{q}^\sss{(Mod)}_\sss{CL}(A,\gamma) = \vec{v}(A)-\gamma\cdot\vec{d}(A),\quad
\vec{q}^\sss{(Mod)}_\sss{ER}(A,\gamma) = \vec{v}(A)-\gamma\frac{ m}{N} \vec{1},
\]
where $\vec{v}(A)$ is the \emph{adjacency vector} (the half-vectorization of the adjacency matrix); $m$ is the number of edges; and $\vec{d}(A)_{ij}=\tfrac{1}{2m}d_id_j$. \added{For a fixed null model, the vectors that are obtained by varying $\gamma$ have the following geometric interpretation~\citep{gosgens2023hyperspherical}:}
\replaced{b}{B}ecause the query vector $\vec{q}^\sss{(Mod)}_\sss{ER}$ is a linear combination of $\vec{v}(A)$ and $\vec{1}$\deleted{,} with coefficients depending on $\gamma$, it can be shown that $\vec{q}^\sss{(Mod)}_\sss{ER}(A,\gamma)$ lies on the meridian of $\vec{v}(A)$ for every value of $\gamma$. The latitude of this modularity vector is related to the resolution parameter by
\begin{equation}\label{eq:erm-latitude}
    \tan\ell\left(\vec{q}^\sss{(Mod)}_\sss{ER}(A,\gamma)\right)=\frac{\sqrt{\frac{N-m}{m}}}{\gamma-1}.
\end{equation}
The vector $\vec{q}^\sss{(Mod)}_\sss{CL}(A,\gamma)$ does not lie on a single meridian and its latitude does not allow for such an elegant formula. However, the set $\left(\vec{q}^\sss{(Mod)}_\sss{CL}(A,\gamma)\right)_{\gamma\geq0}$ does correspond to a \emph{geodesic} on the hypersphere, which is the spherical analogue to a straight line in Euclidean geometry. This geodesic runs from $\vec{v}(A)$ to $-\vec{d}(A)$.
\added{In general, let $\mathcal{N}$ be a null model and let $\vec{p}^{\mathcal{N}}(A)$ be the corresponding vector of expected edge probabilities for the null model $\mathcal{N}$, then the set $(\vec{q}^\sss{(Mod)}_{\mathcal{N}}(A,\gamma))_{\gamma\in[0,\infty]}$ corresponds to the geodesic between $\vec{v}(A)$ and $-\vec{p}^{\mathcal{N}}(A)$.
In Figure~\ref{fig:modularity_lines}, we illustrate the geodesics formed by the modularity vectors for the ER and CL null models.}

\begin{figure}
    \centering
    \resizebox{0.4\textwidth}{!}{\begin{tikzpicture} 


\def\R{3} 
\def\angEl{5} 
\def\angAz{-101} 
\def\xlong{48}
\def\xlat{25}
\def\bTlat{40}


\pgfmathsetmacro\H{\R*cos(\angEl)} 
\tikzset{xyplane/.style={cm={cos(\angAz),sin(\angAz)*sin(\angEl),-sin(\angAz),
                              cos(\angAz)*sin(\angEl),(0,-\H)}}}
\LongitudePlane[xzplane]{\angEl}{\angAz}
\LongitudePlane[pzplane]{\angEl}{\angAz+\xlong}
\LatitudePlane[equator]{\angEl}{0}


\fill[ball color=white] (0,0) circle (\R); 
\draw (0,0) circle (\R);


\coordinate[coordinate] (O) at (0,0);
\coordinate[mark coordinate] (N) at (0,\H);
\coordinate[mark coordinate] (S) at (0,-\H);
\path[xzplane] (\bTlat:\R) coordinate[mark coordinate] (bT);
\path[xzplane] (0:\R) coordinate[mark coordinate] (q_ER);
\pgfmathsetmacro\cosx{cos(\xlat)};
\pgfmathsetmacro\cosbT{cos(\bTlat)};
\path[xyplane] (\xlong:\R*\cosx) coordinate (zhat);
\path[xyplane] (0:\R*\cosbT) coordinate (xhat);
\path[equator] (\angAz+\xlong/2:\R) coordinate[mark coordinate] (q_CM);

\pgfmathsinandcos\sinzlat\coszlat{15}
\pgfmathsinandcos\sinxlon\cosxlon{\angAz}

\pgfmathsinandcos\sinEl\cosEl{\angEl} 
 \pgfmathsinandcos\sinAz\cosAz{\angAz} 
 \pgfmathsinandcos\sinxlat\cosxlat{\xlat}
 \pgfmathsinandcos\sinxlon\cosxlon{\xlong+\angAz}

\pgfmathsetmacro\perplength{\R/7};


\DrawLatitudeCircle[\R]{0} 
\DrawLongitudeCircle[\R]{\angAz} 




\path (S) +(0.4ex,-0.4ex) node[below] {\small $-\vec{1}$};
\path (bT) node[above left] {\small $\vec{v}(A)$};
\path (q_ER) node[below left] {\small $\vec{q}^\sss{(Mod)}_\sss{ER}$};
\path (q_CM) node[above right] {\small $\vec{q}^\sss{(Mod)}_\sss{CL}$};

\projection\angEl{\angAz-16}{\xlong-90}\labelx\labely;

\pgfmathsetmacro\projectionlat{atan(cos(\xlong)*tan(\bTlat+90))-90-\bTlat};
\pgfmathsetmacro\heuristiclat{acos(cos(\xlong)*cos(\bTlat+90)/(1+sin(\xlong)*sin(\bTlat+90))};
\DrawArcNoMark{\angEl}{\angAz}{\bTlat}{\xlong}{-\bTlat-\bTlat}{}{}{}{};
\draw[xzplane,->,line width=0.35mm] (bT) arc (\bTlat:-\bTlat:\R);
\end{tikzpicture}
    }
    \caption{Illustration of the geodesics formed by varying the resolution parameter of the modularity vector for a fixed null model. The ER-modularity vectors lie on a single meridian, in contrast to the CL-modularity vectors.}
    \label{fig:modularity_lines}
\end{figure}
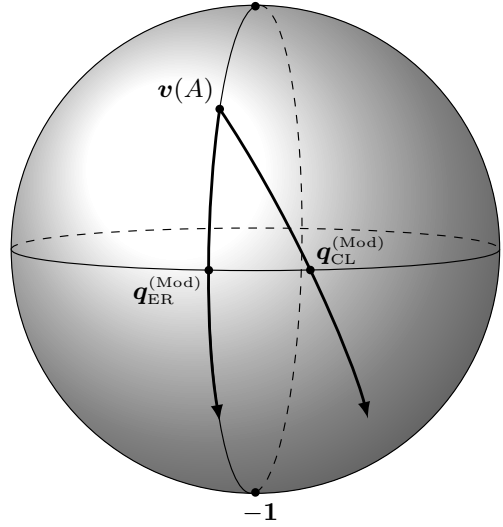

\subsubsection*{The granularity problem of modularity maximization\deleted{.}}
\added{It is well-known that modularity maximization is unable to detect communities that are either too large or too small~\citep{Fortunato2007}. This behavior is often demonstrated by the \emph{ring of cliques}, a graph consisting of $k$ cliques of size $s$ each, where each clique is connected to the next one by a single edge. Let us denote the adjacency matrix of this ring of cliques by $A_{k,s}$, and let $T_{k,s}$ denote its natural clustering into $k$ cliques. For fixed $\gamma$ and $s$ and sufficiently large $k$, modularity-maximizing algorithms will merge neighboring cliques. Geometrically, this problem can be understood as follows: because $N=\Theta(k^2)$ and $m=\Theta(k)$, \eqref{eq:erm-latitude} tells us that $\tan\ell\left(\vec{q}^\sss{(Mod)}_\sss{ER}(A_{k,s},\gamma)\right)=\Theta(\sqrt{k})$ if $\gamma\ne 1$ and $\tan\ell\left(\vec{q}^\sss{(Mod)}_\sss{ER}(A_{k,s},\gamma)\right)=+\infty$ if $\gamma=1$. Thus, for any fixed $\gamma>0$, we have $\ell\left(\vec{q}^\sss{(Mod)}_\sss{ER}(A_{k,s},\gamma)\right)\rightarrow \tfrac{\pi}{2}$ as $k\to\infty$. However, the clustering $T_{k,s}$ has $|\intra(T_{k,s})|=\Theta(k)$, so that $\ell(\vec{b}(T_{k,s}))=\arccos\left(1-2\cdot |\intra(T_{k,s})|/N\right)\rightarrow\arccos(1)=0$. Now, by the inverse triangle inequality, we have
\[
d_a(\vec{q}^\sss{(Mod)}_\sss{ER},\vec{b}(T_{k,s}))\ge |d_a(\vec{q}^\sss{(Mod)}_\sss{ER},-\vec{1})-d_a(-\vec{1},\vec{b}(T_{k,s}))|=|\ell(\vec{q}^\sss{(Mod)}_\sss{ER})-\ell(\vec{b}(T_{k,s}))|\rightarrow \frac{\pi}{2}.
\]
Note that half of the hypersphere lies within a distance $\frac{\pi}{2}$ from the vector $\vec{q}^\sss{(Mod)}_\sss{ER}$. And indeed, it can be shown that the clustering consisting of pairs of neighboring cliques is closer to $\vec{q}^\sss{(Mod)}_\sss{ER}$ than $\vec{b}(T_{k,s})$, so that modularity-maximizing algorithms will prefer such clusterings over the natural clustering into cliques. }

\added{Note that by~\eqref{eq:erm-latitude}, the latitude of the ER-modularity vector is a monotone function of the resolution parameter $\gamma$. Thus, choosing the resolution parameter is equivalent to setting the latitude of the query vector $\lambda=\ell(\vec{q}^\sss{(Mod)}_\sss{ER})$.
The hyperspherical geometry suggests two ways to choose $\lambda$. 
The first approach is to choose $\lambda$ to minimize the angular distance $d_a(\vec{q}^\sss{(Mod)}_\sss{ER},\vec{b}(T_{k,s}))$. Note that changing the query latitude $\lambda$ does not affect the correlation distance $\dCC(\vec{q}^\sss{(Mod)}_\sss{ER},\vec{b}(T_{k,s}))$. This allows us to use the law of cosines~\eqref{eq:hypercosines}, to express $\cos d_a(\vec{q}^\sss{(Mod)}_\sss{ER},\vec{b}(T_{k,s}))$ as a function of $\lambda$, $\lambda_T=\ell(\vec{b}(T_{k,s}))$, and $\theta=\dCC(\vec{q}^\sss{(Mod)}_\sss{ER},\vec{b}(T_{k,s}))$ as
\[
\cos d_a(\vec{q}^\sss{(Mod)}_\sss{ER},\vec{b}(T_{k,s}))=\cos\lambda \cdot \cos\lambda_T+\cos\theta \cdot \sin\lambda \cdot \sin\lambda_T.
\]
Minimizing $d_a(\vec{q}^\sss{(Mod)}_\sss{ER},\vec{b}(T_{k,s}))$ as a function of $\lambda$ yields $\tan \lambda= \cos\theta \tan \lambda_T,$ i.e., $\lambda\sim \cos(\theta)\lambda_T$ as $\lambda_T\to 0$.
The second approach for choosing $\lambda$ is to simply equate the query latitude to the latitude of $T_{k,s}$, so $\lambda=\lambda_T$. 
Both approaches yield $\lambda\rightarrow0$ as $k\to\infty$, so that by the triangle inequality the distance between $T_{k,s}$ and the modularity vector with latitude $\lambda$ vanishes as
\[
d_a(\vec{q}^\sss{(Mod)}_\sss{ER},\vec{b}(T_{k,s}))\le d_a(\vec{q}^\sss{(Mod)}_\sss{ER},-\vec{1})+d_a(-\vec{1},\vec{b}(T_{k,s}))=\lambda+\lambda_T\rightarrow 0.
\]
In terms of the resolution parameter $\gamma$, both these approaches yield $\gamma=\Theta(k)$. Moreover,
it can be shown that for both of these approaches, merging neighboring cliques does not decrease $d_a(\vec{q}^\sss{(Mod)}_\sss{ER},\vec{b}(T_{k,s}))$. This demonstrates that these two approaches effectively address the granularity problem for the ring of cliques.
Let us note that both these approaches heavily rely on the hyperspherical geometry, and that they cannot be applied to the modularity function directly. Indeed, the first approach does not work because maximizing $\text{ERM}(T;A,\gamma)$ as a function of $\gamma$ yields the trivial solution $\gamma=0$, while the second approach is ill-defined without the notion of latitude.}

\added{While these two approaches effectively address the granularity problem of modularity for this ring of cliques network, they do not work well in general. We provide a better and more universal approach  in Section~\ref{sec:granularity}.}



\subsection{Likelihood of generalized planted partition models}\label{sec:likelihood}
The Planted Partition Model (PPM) is one of the simplest random graph models that incorporates community structure.
In this model, we assume there is some planted clustering into communities (the ground truth partition) and that nodes of the same community are connected with probability $p_\sss{in}$, while nodes of different communities are connected with probability $p_\sss{out}<p_\sss{in}$.
The likelihood of the PPM was derived in~\cite{holland1983stochastic}. For an adjacency matrix $A$, the likelihood that it was generated by a PPM with clustering $C$, is given by
\[
\text{PPM-Likelihood}(A|C,p_\sss{in},p_\sss{out})=\prod_{ij\in\intra(C)}p_\sss{in}^{A_{ij}}(1-p_\sss{in})^{1-A_{ij}}\prod_{ij\in\inter(C)}p_\sss{out}^{A_{ij}}(1-p_\sss{out})^{1-A_{ij}}.
\]
In this standard PPM, we see that the adjacency matrix has binary entries.
In~\cite{avrachenkov2020community}, the PPM is generalized to allow for \emph{pairwise interactions} in any measurable set $\mathcal{I}$. That is, we require $A_{ij}\in\mathcal{I}$ for all $i<j$. For example, by taking $\mathcal{I}=\mathbb{R}$, we get a weighted undirected graph, while directed graphs can be modeled by $\mathcal{I}=\mathbb{R}^2$, so that each interaction\deleted{s} corresponds to the tuple of weights $w_{ij}$ and $w_{ji}$. Furthermore, this generalization also allows one to model temporal and multilayer networks.
Similarly to the binary PPM, it is assumed that the distribution of the interaction between $i$ and $j$ only depends on whether $i$ and $j$ are in the same community.  That is, there are likelihood functions $f_\sss{in},f_\sss{out}$ that measure the likelihood of an interaction $A_{ij}\in\mathcal{I}$ resulting from an intra- or inter-community interaction. Importantly, we require the interactions to be pairwise independent. This allows us to express the likelihood of the interaction matrix $A$ as the product of these pairwise likelihoods:
\[
\text{GenPPM-Likelihood}(A|C,f_\sss{in},f_\sss{out})=\prod_{ij\in\intra(C)}f_\sss{in}(A_{ij})\prod_{ij\in\inter(C)}f_\sss{out}(A_{ij}).
\]
After taking the logarithm, it is easy to see that this is equal to the maximization variant of correlation clustering with $w_{ij}^+=\log f_\sss{in}(A_{ij})$ and  $w_{ij}^-=\log f_\sss{out}(A_{ij})$, so that, by Lemma~\ref{lem:correlation-clustering}, it is a projection method with query vector
\begin{equation}\label{eq:query-ppm}
\vec{q}^\text{(PPM)}_{ij}=\log\frac{f_\sss{in}(A_{ij})}{f_\sss{out}(A_{ij})}.
\end{equation}
The standard binary PPM is recovered for $f_\sss{in}(a)=p_\sss{in}^a(1-p_\sss{in})^{1-a}$ and $f_\sss{out}(a)=p_\sss{out}^a(1-p_\sss{out})^{1-a}$.

\subsubsection*{The granularity bias of likelihood maximization.}
It has been observed that likelihood maximization methods for community detection have a bias towards communities of sizes close to $\log n$~\citep{zhang2020statistical,peixoto2021descriptive,gosgens2023correcting}. This can be understood by linking likelihood maximization to Bayesian inference~\citep{peixoto2021descriptive}: Bayesian community detection methods~\citep{peixoto2019bayesian} assume a prior distribution over the set of clusterings and then find the clustering with the highest posterior probability.
Bayes' rule reads
\[
\text{Posterior}(C|A)=\frac{\text{Likelihood}(A|C)\cdot\text{Prior}(C)}{\sum_{C'}\text{Likelihood}(A|C')\cdot\text{Prior}(C')}.
\]
Note that the denominator is constant w.r.t. $C$. If we were to assume a uniform prior, i.e. $\text{Prior}(C)\propto 1$, then we get $\text{Posterior}(C|A)\equiv\text{Likelihood}(A|C)$. This tells us that likelihood maximization is equivalent to Bayesian inference under the assumption of a uniform prior. The uniform distribution over clusterings has been studied in the field of combinatorics for decades~\citep{harper1967stirling,sachkov1997probabilistic}. For example, it is known that, asymptotically as $n\rightarrow\infty$, almost all clusters will have sizes close to $\log n$. This explains why likelihood maximization methods have a bias towards clusterings of this granularity.

\subsection{Which methods are \emph{not} projection methods?}
Not every community detection method fits our hyperspherical framework of community detection. A community detection method is \emph{not} a projection method if the quality function cannot be monotonously transformed to a sum over intra-cluster pairs. Also, in projection methods, the contribution of a node-pair $ij$ to the sum may depend on the input data (e.g., the graph), but \emph{not} on the clustering $C$. In this section, we give a few examples of methods that do not fit this framework.

\subsubsection*{Other inferential methods}
In Section~\ref{sec:likelihood} we saw that some likelihood methods fit our hyperspherical framework. However, not \emph{all} inferential methods are projection methods. For example, suppose we take a PPM where the intra-community density is such that each node has $\lambda$ intra-community neighbors in expectation. Then, if $i$ and $j$ are in a community of size $s$, they should be connected with probability $\lambda/(s-1)$. Hence, the likelihood function $f_\sss{in}$ in~\eqref{eq:query-ppm} would depend on $s$, which requires the query vector $\vec{q}^{\text{(PPM)}}$ to depend on $C$. Our hyperspherical framework does not allow for this.
Similarly, the Bayesian stochastic blockmodeling inference from \cite{peixoto2019bayesian} does not fit our hyperspherical framework because the contribution of each node pair $ij$ depends on the size (and label) of the communities of $i$ and $j$ in an intricate way that cannot be captured by a query vector.

\subsubsection*{$k$-means clustering}
The \emph{$k$-means algorithm} is arguably the oldest and most well-studied clustering method~\citep{jain2010data}. The aim of $k$-means is to divide given vectors $\vec{x}_1,\dots,\vec{x}_n\in\mathbb{R}^{d}$ into $k$ clusters. For each cluster, we compute the \emph{center} as the arithmetic mean of the vectors inside this cluster. The $k$-means algorithm iteratively computes the centers and re-assigns each vector to its nearest center until convergence. 
In~\cite{dhillon2004kernel} it is shown that $k$-means is equivalent to maximizing
\begin{equation}\label{eq:k-means}
    \text{Trace}\left(\hat{H}(C)^\top X \hat{H}(C)\right),
\end{equation}
where $X\in\mathbb{R}^{n\times n}$ is defined by $X_{ij}=\langle\vec{x}_i,\vec{x}_j\rangle$, and $\hat{H}(C)=H(C) (H(C)^\top H(C))^{-1/2}\in\mathbb{R}^{n\times k}$. That is, $\hat{H}(C)_{ia}=s_a^{-1/2}$ if node $i$ is part of the community with label $a$ and size $s_a$. Note that this form resembles the trace-maximization of Markov stability. A straightforward computation shows that  
\[
\left(\hat{H}(C)^\top X \hat{H}(C)\right)_{aa}=\tfrac{1}{s_a}\left(H(C)^\top X H(C)\right)_{aa}.
\]
Therefore, the contribution of each community is again normalized by its size, which is not allowed in our hyperspherical framework.

Another clustering method closely related to $k$-means is \emph{spectral clustering}~\citep{von2007tutorial}. In spectral clustering, we are given an affinity matrix $X\in\mathbb{R}^{n\times n}$ and consider its leading eigenvectors. These leading eigenvectors define coordinates for the $n$ objects, which are then clustered using spatial clustering methods like $k$-means. Spectral clustering differs from other clustering and community detection methods in the sense that it does not explicitly optimize a quality function. However, when using $k$-means for the final clustering step, one could consider it to be optimizing something of the form of~\eqref{eq:k-means}, with $X$ replaced by its low-rank approximation. Therefore, spectral clustering also does not fit our framework of projection methods.

\section{Projection algorithms}\label{sec:algorithms}
The previous section shows that many community detection methods fit our definition of projection methods. A consequence of this is that the same optimization algorithms can be used for each of them. However, it is known that this optimization is NP-hard, and in some forms even APX-hard.

\subsection{Exact optimization}
The general problem of correlation clustering~\citep{bansal2004correlation} and the subproblem of maximizing modularity~\citep{brandes2007modularity,meeks2020parameterised} are known to be NP-complete. However, modularity maximization is known to be Fixed-Parameter Tractable (FPT) when parametrized by the size of the minimum vertex cover of the graph~\citep{meeks2020parameterised}. Nevertheless, it has been shown that modularity maximization~\citep{dinh2015network} and the maximization variant of correlation clustering~\citep{charikar2005clustering} are APX-hard, meaning that approximating it to any constant factor is NP-hard.
This tells us that for large graphs, it may be prohibitively expensive to compute the clustering that minimizes $d_a(\vec{q},\vec{b}(C))$ over the set of clusterings $C$, for a general query vector $\vec{q}$.

Nevertheless, there are some approaches that are able to optimize some of the objectives from Section~\ref{sec:equivalences} with surprising efficiency. In particular, the \emph{Bayan algorithm}~\citep{aref2022bayan} for modularity maximization is able to find the exact modularity maximum in graphs of up to several thousands nodes within hours. This approach relies on an Integer Linear Programming (ILP) formulation. Similar ILP formulations exist for the general problem of correlation clustering~\citep{bansal2004correlation}. These can be converted to the following general ILP formulation of the projection method: in the projection step, we maximize
\[
\sum_{i<j}\vec{q}_{ij}\vec{b}(C)_{ij},
\]
subject to $\vec{b}(C)_{ij}\in\{-1,1\}$ for all $i<j$ and the constraints
\[
    b_{ij}+b_{ik}-b_{jk}\leq1,
\]
for all $i,j,k\in[n]$.

\subsection{Approximate optimization}
There are many approximate maximization algorithms that are able to quickly find clusterings with high modularity. The Louvain~\citep{blondel2008fast} and Leiden~\citep{traag2019louvain} algorithms are perhaps the most well-known heuristics for modularity maximization. \deleted{These algorithms are able to find a local maximum of modularity with running time that (empirically) scales linearly with the number of edges.} 
\added{These algorithms iterate over the nodes and make use of the network sparsity to find the greedy relabeling of a node. For a node $i$, finding this greedy relabeling has complexity $\mathcal{O}(d_i)$.
The Louvain algorithm terminates when it achieves a local maximum. Since it is non-trivial to bound the number of iterations needed to reach a local maximum, there are no theoretical guarantees for the complexity. However, the running time empirically scales linearly with the number of edges.}
While these algorithms often attain values close to the global optimum, they rarely\deleted{ ever} find the exact global optimum~\citep{aref2023heuristic}.

The algorithms proposed in the field of correlation clustering come with theoretical approximation guarantees.
Due to the equivalence established in Lemma~\ref{lem:correlation-clustering}, these algorithms can be applied to modularity maximization. Conversely, modularity maximization algorithms like the Louvain algorithm can be applied to the correlation clustering quality function, allowing for comparisons between these algorithms. Interestingly, while there exist no optimization guarantees for the Louvain algorithm, it does seem to outperform correlation clustering algorithms in such comparisons~\citep{veldt2018correlation}.
The Louvain algorithm can be modified to minimize $d_a(\vec{q},\vec{b}(C))$ with similar performance\replaced{. However,}{, though} the computational complexity may depend on the particular query vector $\vec{q}$~\citep{gosgens2023hyperspherical}.
\added{More precisely, our modification of Louvain assumes a query vector of the form $\vec{q}=\vec{v}(S+L)$, where $S\in\mathbb{R}^{n\times n}$ is a sparse matrix and $L\in\mathbb{R}^{n\times n}$ is a low-rank matrix. 
This way, finding the greedy relabeling for a node $i$ has linear complexity in terms of the number of non-zero entries of $S$ adjacent to $i$. We observe that the running time of this re-implementation of Louvain is proportional to the number of elements in non-zero elements in $S$~\citep{gosgens2023hyperspherical}.}

We denote by $\louv(\vec{q})$ the clustering vector that results from minimizing $d_a(\vec{q},\vec{b}(C))$ over the set of clusterings $C$ by the Louvain algorithm.

\subsection{Do we \emph{need} the global optimum?}
The modularity landscape is known to be \emph{glassy}~\citep{good2010performance}, which means that there are many local maxima with values close to the global maximum. It is likely that for a general query vector $\vec{q}$, the landscape of $d_a(\vec{q},\vec{b}(C))$ suffers from a similar glassiness, which explains why its exact minimization is computationally expensive, while its approximate minimization is computationally cheap.

However, the ultimate goal of community detection is not to minimize the distance to some query vector, but to obtain a meaningful clustering of the network nodes. In settings where we have a generative model, like the PPM, the popular LFR benchmark~\citep{lancichinetti2008benchmark}, or the more recent ABCD benchmark~\citep{kaminski2021artificial}, a meaningful clustering is a clustering that is similar to the planted clustering. However, there is no guarantee that the planted clustering corresponds to the global (or even a local) optimum. 
Most prominently, in sparse network models, it is highly unlikely that  a locally optimal clustering corresponds to the planted clustering. A simple argument for this is that a sparse network model contains isolated nodes with high probability, and these nodes will not be assigned to their true community in any locally optimal clustering. 

Moreover, when applying the Louvain algorithm to graphs from generators, it has been observed that the obtained modularity  often exceeds the modularity of the planted clustering. This tells us that simple greedy optimization algorithms like Louvain already result in a clustering vector $\louv(\vec{q})$ that is nearer to $\vec{q}$ than the planted clustering vector $\vec{b}(T)$, i.e. $d_a(\vec{q},\louv(\vec{q}))\leq d_a(\vec{q},\vec{b}(T))$.

In the experiments for this paper, we have applied the Louvain algorithm 4150 times for different combinations of query mappings and networks. In most of these cases, we chose a query vector using a heuristic, which will be explained in Section~\ref{sec:granularity}, that is designed to ensure $d_a(\vec{q},\louv(\vec{q}))\approx d_a(\vec{q},\vec{b}(T))$. However, in all these 4150 applications of the Louvain algorithm, we have only observed 253 cases where $d_a(\vec{q},\louv(\vec{q}))> d_a(\vec{q},\vec{b}(T))$. Furthermore, most of these might be attributed to numerical errors because there are only 10 instances where $d_a(\vec{q},\vec{b}(C))$ is more than 1\% larger than $d_a(\vec{q},\vec{b}(T))$ and no instances where it is more than 2\% larger.
This tells us that approximate optimization algorithms like Louvain easily find clusterings with higher quality (i.e., lower $d_a(\vec{q},\vec{b}(C))$) than the planted clustering.

The important conclusion from these observations is that better optimization algorithms do not necessarily result in more meaningful clusterings.
Instead, it seems more important to choose the query vector so minimizers of $d_a(\vec{q},\vec{b}(C))$ are close to the ground truth clustering vector $\vec{b}(T)$.

\section{Choosing the query vector: heuristics and open problems}
\label{sec:experiments}
In the previous section, we have seen that the choice of the quality function (or equivalently, the query vector) may have a bigger impact on the performance of a community detection method than the choice of the optimization algorithm.

Choosing a quality function is difficult because it is hard to compare two quality functions in a meaningful way.
However, when restricting to the set of projection methods, the hyperspheric geometry provides us with additional tools to compare the query vectors: for example, we can compute the latitude of the query vector and distances between query vectors. This provides us with some information about the relative position of the query vectors.
In addition, these query vectors define a vector space, so that linear combinations of query vectors also correspond to community detection methods.
In this section, we discuss several ways to choose a query mapping, which maps graphs to a query vectors $\vec{q}$.

\subsection{Graph generators}\label{sec:generators}
We assume that we are given some generator, which produces a tuple $(A,T)$ of an adjacency matrix and a planted clustering (the ground truth). This generator defines a joint distribution on $(A,T)$.
In our experiments, we make use of several different generators:

\noindent\textbf{The Planted Partition Model (PPM).} 
The standard (not generalized) PPM from Section~\ref{sec:likelihood} is the simplest random graph model with community structure.
In this model, there is a planted clustering (partition) of the nodes, and nodes of the same community are more likely to connect to each other than nodes of different communities. We discuss three different variants of the PPM. The first one is a random graph model with homogeneity in both the degree and the community size distribution. We consider $k$ equally sized communities of size $n/k$ (assuming $k$ divides $n$), and assume that each node has (in expectation) the same number of neighbors inside and outside its community, given by $\lambda_\sss{in}$ and $\lambda_\sss{out}$, respectively. We then set the connection probabilities as
\[
p_\sss{in}=\frac{\lambda_\sss{in}}{s-1},\quad\text{and}\quad
p_\sss{out}=\frac{\lambda_\sss{out}}{n-s}.
\]
This way, each node's degree follows the same distribution, which is a sum of two binomially distributed random variables, which can be approximated by a Poisson distribution with \replaced{mean}{rate} $\lambda_\sss{in}+\lambda_\sss{out}$. 

\noindent\textbf{The Heterogeneously-sized PPM (HPPM).}
The second variant of the PPM has homogeneous degrees (again, approximately Poisson distributed), but has heterogeneity in the community\added{-}size distribution. We draw $k$ community sizes from a power-law distribution with some power-law exponent $\delta$, meaning that the probability of obtaining a size $s$ decays as $s^{-\delta}$.
We make sure that each node has on average $\lambda_\sss{in}$ intra-community neighbors and $\lambda_\sss{out}$ neighbors outside of its community, by setting 
\[
p_\sss{in}(s)=\frac{\lambda_\sss{in}}{s-1}
\]
for nodes in communities of size $s$, and 
\[
p_\sss{out}=\frac{n\lambda_\sss{out}}{2\cdot(N-m_T)},
\]
where $m_T$ is the number of intra-community pairs in the planted clustering $T$. 

\noindent\textbf{The Degree-Corrected PPM (DCPPM).} To obtain a graph generator with degree heterogeneity and homogeneous community sizes, we assign \replaced{a weight $\theta_i>0$ to each node}{to each node a weight $\theta_i>0$} and use the PPM parametrization that \added{was} proposed in~\cite{prokhorenkova2019community}. We consider $k$ equally-sized communities of size $n/k$ (again, assuming $k$ divides $n$). We denote the sum of weights inside the $a$-th community by $\Theta_a$ and denote the total weight by $\Theta=\sum_{a=1}^k\Theta_a$. Nodes $i$ and $j$ that are both in the $a$-th community are connected with probability
\[
p_\sss{in}(\theta_i,\theta_j,\Theta_a)=\frac{\lambda_\sss{in}}{\lambda_\sss{in}+\lambda_\sss{out}}\frac{\theta_i\theta_j}{\Theta_a}+\frac{\lambda_\sss{out}}{\lambda_\sss{in}+\lambda_\sss{out}}\frac{\theta_i\theta_j}{\Theta},
\]
and nodes from different communities are connected with probability
\[
p_\sss{out}(\theta_i,\theta_j)=\frac{\lambda_\sss{out}}{\lambda_\sss{in}+\lambda_\sss{out}}\frac{\theta_i\theta_j}{\Theta}.
\]
With these parameters, a node has on average approximately $\lambda_\sss{in}$ neighbors inside its community and $\lambda_\sss{out}$ neighbors outside its community. In addition, the expected degree of a node $i$ is approximately equal to its weight $\theta_i$. To obtain a degree distribution with power-law exponent $\tau$, we draw the weights from a distribution with this same power-law exponent.

\noindent\textbf{The Artificial Benchmark for Community Detection (ABCD).} The Artificial Benchmark for Community Detection (ABCD) is a graph generator that incorporates heterogeneity in both the degree and community\added{-}size distribution in order to generate graphs that resemble real-world networks~\citep{kaminski2021artificial}. This is done by generating a sequence of community sizes and degrees with power-law exponents $\delta$ and $\tau$. Then, it performs a matching process to assign degrees to nodes inside communities. The generator has a parameter $\xi$ that controls the fraction of edges that are inter-community edges.

\noindent\textbf{\added{Parameter choices in our graph generators.}} We set the parameters of the graph generators as follows: we consider graphs with $n=1000$ nodes and mean degree $\lambda_\sss{in}+\lambda_\sss{out}=8$. We choose the parameters of these generators so that each node has (in expectation) $\lambda_\sss{out}=2$ neighbors outside its community. For DCPPM and ABCD, we set the power-law exponent of the degree distribution to $\tau=2.5$.
We generate the planted partitions as follows:
For PPM and DCPPM, we consider $k=50$ communities of size $s=20$ each. For ABCD, we set $\xi=\tfrac{1}{4}$.

\subsection{A heuristic for controlling the granularity of the detected clustering}\label{sec:granularity}
Modularity and Markov stability both have a parameter that controls the granularity of the detected clusterings. Modularity comes with a resolution parameter $\gamma$, and increasing $\gamma$ typically results in detecting communities of smaller sizes. However, it is unclear how this resolution parameter should be chosen in order to detect clusterings of the desired granularity. With `desired', we mean that the granularity of the detected clustering is similar to the granularity of the planted clustering in cases where the graph is drawn from a graph generator. For ER-modularity, there is a particular value of $\gamma(p_\sss{in},p_\sss{out})$ for which maximizing ER-modularity is equivalent to maximizing the likelihood of a PPM with parameters $p_\sss{in},p_\sss{out}$. However, as mentioned in Section~\ref{sec:likelihood}, it is known that maximizing this likelihood is biased towards communities of logarithmic size.
Markov stability comes with a time parameter $t$ which controls the granularity of the detected clustering. Increasing $t$ results in detecting larger communities. Again, it is unclear how this time should be chosen in order to detect communities of the desired granularity.

Within the framework of projection methods, a natural measure of the granularity of a clustering $C$ is the latitude $\ell(\vec{b}(C))$ of the corresponding clustering vector. Hence, in cases where the graph is drawn from a generator with a planted clustering $T$, a clustering with `desired' granularity is a clustering $C$ with $\ell(\vec{b}(C))\approx\ell(\vec{b}(T))$.  In turn, the desired $\ell(\vec{b}(C))$ can be obtained by choosing the right latitude of a query vector. How to make this choice\deleted{,} is the topic of the remainder of this Section~\ref{sec:granularity}. 

The simplest way to change a query vector in order to detect clusterings of coarser granularity, is to add a multiple of $\vec{1}$. That is, a new query vector $\vec{q}'=\vec{q}+c\cdot\vec{1}$ for some $c>0$. The vectors $\vec{q}'$ and $\vec{q}$ lie on the same meridian, i.e., $\dCC(\vec{q},\vec{q}')=0$, while $\vec{q}'$ is further away from $-\vec{1}$, so that $\ell(\vec{q}')>\ell(\vec{q})$. Hence, adding $c\cdot\vec{1}$ is equivalent to projecting the vector $\vec{q}$ to a different latitude, i.e., $\vec{q}'=\parp_\lambda(\vec{q})$ for some $\lambda\in(\ell(\vec{q}),\pi]$. Similarly, the simplest way to change a query vector in order to detect clusterings of finer granularity, is to subtract a multiple \added{of} $\vec{1}$, i.e., $\vec{q}'=\vec{q}-c\cdot \vec{1}$, which is equivalent to $\vec{q}'=\parp_\lambda(\vec{q})$ for $\lambda\in[0,\ell(\vec{q}))$. Hence, the question becomes: given a query vector $\vec{q}$, how should $\lambda$ be chosen such that $\vec{q}'=\parp_\lambda(\vec{q})$ results in a clustering with similar granularity as $\vec{b}(T)$, i.e., such that $\ell\left(\louv(\vec{q}')\right)\approx \ell(\vec{b}(T))$?

In~\cite{gosgens2023correcting}, we proposed a general heuristic that prescribes this latitude as a function of $\lambda_T=\ell(\vec{b}(T))$ and $\theta=\dCC(\vec{q},\vec{b}(T))$. This \emph{granularity heuristic} prescribes
\begin{equation}\label{eq:heuristic}
    \lambda^*(\lambda_T,\theta)=\arccos\left(\frac{\cos\lambda_T\cos\theta}{1+\sin\lambda_T\sin\theta}\right)
\end{equation}
We denote the query vector obtained by applying the granularity  heuristic, by $\vec{q}^*(\vec{q})=\parp_{\lambda^*(\lambda_T,\theta)}(\vec{q})$, where $\vec{q}$ can be any query vector. The latitude in~\eqref{eq:heuristic} is chosen so that $d_a(\vec{q}^*,\vec{b}(T))=\theta$. We have observed that this leads to $d_a(\vec{q}^*,\louv(\vec{q}^*))\approx \theta$ and $\dCC(\vec{q}^*,\louv(\vec{q}^*))\approx\theta$. Whenever these approximations are valid, it can be shown that $\ell(\louv(\vec{q}^*))\approx\ell(\vec{b}(T))$.

\added{We briefly illustrate how solving $d_a(\vec{q}^*,\vec{b}(T))=\theta$ leads to~\eqref{eq:heuristic}: we use~\eqref{eq:correlation-dist} to express $\cos\theta$ in terms of $\lambda_T$, $\lambda^*$ and $d_a(\vec{q}^*,\vec{b}(T))$ like
\[
\cos\theta=\frac{\cos d_a(\vec{q}^*,\vec{b}(T))-\cos\lambda_T\cdot\cos\lambda^*}{\sin\lambda_T\cdot\sin\lambda^*}.
\]
Squaring both sides and making the substitutions $\cos d_a(\vec{q}^*,\vec{b}(T))=\cos\theta$ and $\sin^2\lambda^*=1-\cos^2\lambda^*$ yields
\[
\cos^2\theta=\frac{\cos^2 \theta+\cos^2\lambda_T\cdot\cos^2\lambda^*-2\cos\theta\cdot \cos\lambda_T\cdot\cos\lambda^*}{\sin^2\lambda_T\cdot(1-\cos^2\lambda^*)}.
\]
This can be rewritten to the following quadratic equation in $\cos\lambda^*$:
\[
(1-\sin^2\theta\sin^2\lambda_T)\cos^2\lambda^*-2\cos\theta\cdot \cos\lambda_T\cdot \cos\lambda^*+\cos^2\theta\cdot\cos^2\lambda_T=0,
\]
which has solutions
\begin{align*}
\cos\lambda^*&=\frac{2\cos\theta\cdot \cos\lambda_T\pm \sqrt{4\cos^2\theta\cdot \cos^2\lambda_T-4\cdot (1-\sin^2\theta\sin^2\lambda_T)\cos^2\theta\cdot\cos^2\lambda_T}}{2\cdot (1-\sin^2\theta\sin^2\lambda_T)}\\
&=\cos\theta\cdot \cos\lambda_T\cdot\frac{1\pm \sin\theta\sin\lambda_T}{1-\sin^2\theta\sin^2\lambda_T}\\
&=\frac{\cos\theta\cdot \cos\lambda_T}{1\mp\sin\theta\sin\lambda_T}.
\end{align*}
This gives two possible solutions for $\lambda^*$, one of which corresponds to\eqref{eq:heuristic}. We refer to~\cite{gosgens2023correcting} for the remaining details of the derivation and the experimental validation of this heuristic.
}

Note that $\cos\theta$ is the Pearson correlation coefficient between $\vec{q}$ and $\vec{b}(T)$, and can be considered a measure of how much information $\vec{q}$ carries of the clustering $T$. As a special case, note that $\cos\theta=1$ implies that $\vec{q}$ and $\vec{b}(T)$ lie on the same meridian, and we can see that $\lambda^*(\lambda_T,0)=\lambda_T$, so that $\vec{q}^*=\vec{b}(T)$. In the other extreme, where $\vec{q}$ is not correlated with $\vec{b}(T)$ (i.e., $\theta=\pi/2$), we have $\lambda^*(\lambda_T,\pi/2)=\pi/2$, so that the resulting query vector lies on the equator (just like the modularity vector for $\gamma=1$). For $\theta\in(0,\pi/2)$, the heuristic latitude $\lambda^*(\lambda_T,\theta)$ is between $\lambda_T$ and $\pi/2$.

To compute the heuristic latitude choice in~\eqref{eq:heuristic}, we need estimates of $\lambda_T$ and $\theta$, which requires some knowledge of the planted clustering $T$. It might seem that requiring this knowledge of the planted clustering defeats the purpose of community detection. However, requiring partial knowledge of the planted clustering is not uncommon in other community detection methods, such as likelihood-based methods~\citep{newman2016equivalence,prokhorenkova2019community}. Moreover, when we have access to the graph generator, we can use this to estimate the means of $\lambda_T$ and $\theta$, and use these estimates in~\eqref{eq:heuristic}.

In~\cite{gosgens2023correcting}, we have shown that this granularity heuristic works well for several query vectors $\vec{q}$, including modularity vectors. In this section, we demonstrate that this heuristic also works well for Markov stability vectors.
For $t\in\{1,\dots,5\}$, we consider the projection method with query mapping $\vec{q}^\sss{(MS)}_{t}$ from~\eqref{eq:markov-vector}, which is equivalent to maximizing the Markov stability for a discrete-time random walk with time $t$. We compare this method to the projection method with query mapping $\vec{q}^*\left(\vec{q}^\sss{(MS)}_{t}\right)$, which corresponds to applying our granularity heuristic to the Markov stability vector. 

To quantify the quality of the approximation $\ell(\vec{b}(C))\approx \ell(\vec{b}(T))$, we define the \emph{relative granularity error} as $\ell(\vec{b}(C))/\ell(\vec{b}(T))-1$, which we want to be close to $0$. Positive values indicate that the detected clustering is more coarse-grained than the planted clustering, while negative values indicate that the detected clustering is too fine-grained. 
We measure the similarity between the detected and planted communities by the correlation coefficient $\rho(C,T)=\cos\dCC(\vec{b}(C),\vec{b}(T))$. Values close to $1$ indicate that the clusterings are highly similar, while values close to zero indicate that $C$ is not more similar to $T$ than a random relabeling of $C$.

For the PPM, HPPM, DCPPM and ABCD graph generators, the results are shown in Figures~\ref{fig:Markov-Stability-PPM}, \ref{fig:Markov-Stability-HPPM}, \ref{fig:Markov-Stability-DCPPM} and~\ref{fig:Markov-Stability-ABCD}. For each generator, we generate 50 graphs and show boxplots of the outcomes for each of the query mappings.
\begin{figure}
    \centering
    \begin{minipage}[b]{0.8\textwidth}
        
        \subcaptionbox{Granularity errors of Markov stability on PPM networks.\label{fig:Markov-Stability-PPM-latitude}}{\includegraphics[width=\linewidth]{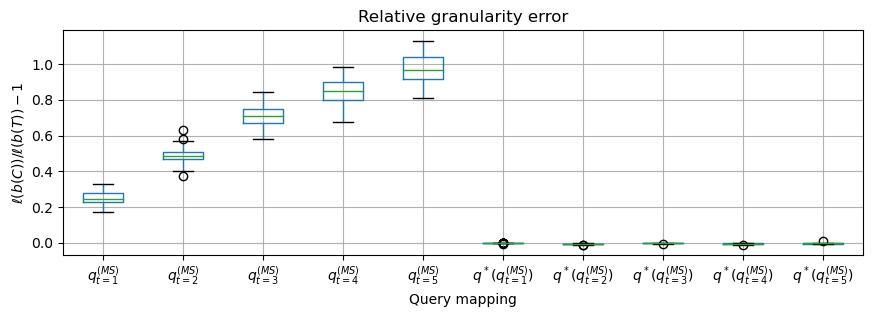}}
        
    \end{minipage}  
   
    \begin{minipage}[b]{0.8\textwidth}
        \subcaptionbox{Performance of Markov stability on PPM networks.\label{fig:Markov-Stability-PPM-performance}}{\includegraphics[width=\linewidth]{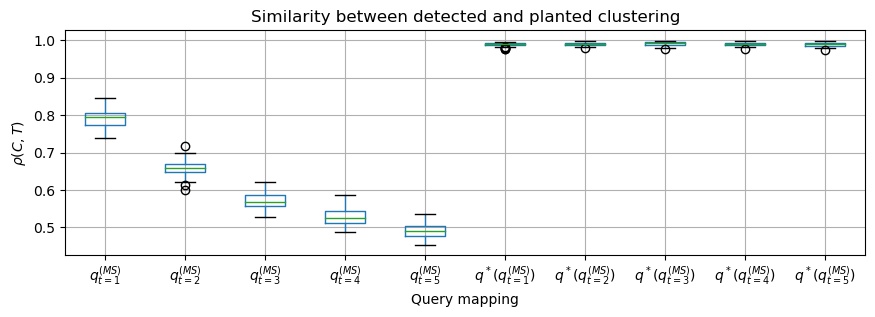}}
    \end{minipage}

    \caption{For 50 graphs drawn from the Planted Partition Model (PPM), we evaluate Markov stability with time $t\in [5]$, and compare the clusterings that are obtained with and without applying the granularity heuristic. $\vec{q}^*(\cdot)$ denotes the heuristic. A positive granularity error indicates that the detected clustering is coarser than the planted clustering. $\rho(C,T)=\cos\dCC(\vec{b}(C),\vec{b}(T))$ is the Pearson correlation between the clustering vectors\added{, which we wish to make close to $1$. We see that $\vec{q}^*(\vec{q}^\sss{(MS)}_t)$} strongly outperforms $\vec{q}^\sss{(MS)}_t$.}
    \label{fig:Markov-Stability-PPM}
\end{figure}

\begin{figure}
    \centering
    \begin{minipage}[b]{0.8\textwidth}
        
        \subcaptionbox{Granularity errors of Markov stability on HPPM networks.\label{fig:Markov-Stability-HPPM-latitude}}{\includegraphics[width=\linewidth]{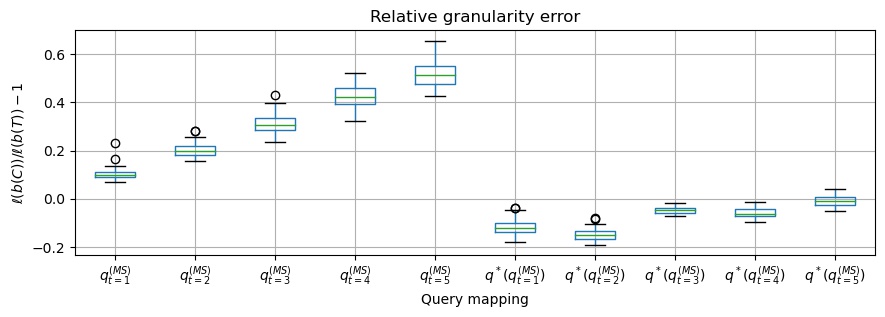}}
        
    \end{minipage}  
   
    \begin{minipage}[b]{0.8\textwidth}
        \subcaptionbox{Performance of Markov stability on HPPM networks.\label{fig:Markov-Stability-HPPM-performance}}{\includegraphics[width=\linewidth]{markov_stability_benchmark_n1000_repeats50_CC_boxplots_PPM.jpg}}
        
    \end{minipage}

    \caption{For 50 graphs drawn from a Heterogeneously-sized Planted Partition Model (HPPM), we evaluate Markov stability with time $t\in [5]$, and compare the clusterings that are obtained with and without applying the granularity heuristic. $\vec{q}^*(\cdot)$ denotes the heuristic. A positive granularity error indicates that the detected clustering is coarser than the planted clustering. $\rho(C,T)=\cos\dCC(\vec{b}(C),\vec{b}(T))$ is the Pearson correlation between the clustering vectors\added{, which we wish to make close to $1$. We see that $\vec{q}^*(\vec{q}^\sss{(MS)}_t)$ strongly outperforms $\vec{q}^\sss{(MS)}_t$}.}
    \label{fig:Markov-Stability-HPPM}
\end{figure}

\begin{figure}
    \centering
    \begin{minipage}[b]{0.8\textwidth}
        \subcaptionbox{Granularity errors of Markov stability on DCPPM networks.\label{fig:Markov-Stability-DCPPM-latitude}}{\includegraphics[width=\linewidth]{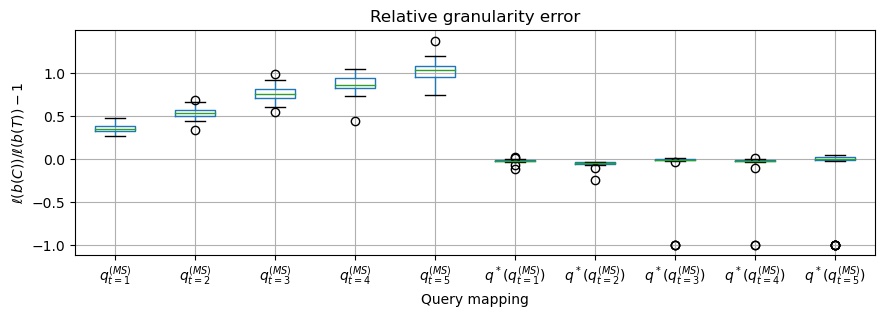}}
    \end{minipage}  
   
    \begin{minipage}[b]{0.8\textwidth}
        \subcaptionbox{Performance of Markov stability on DCPPM networks.\label{fig:Markov-Stability-DCPPM-performance}}{\includegraphics[width=\linewidth]{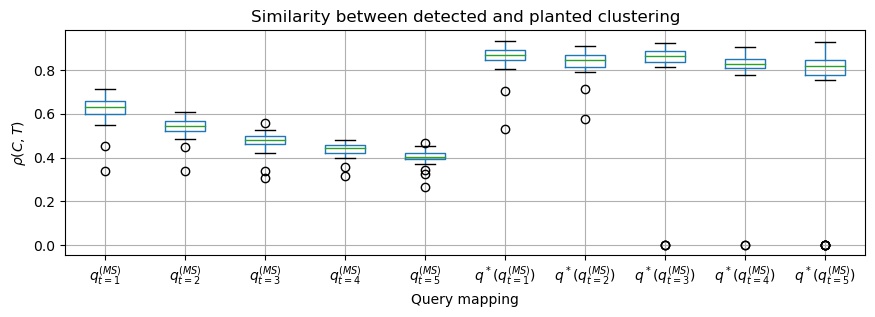}}
        
    \end{minipage}

    \caption{For 50 graphs drawn from the Degree-Corrected Planted Partition Model (DCPPM), we evaluate Markov stability with time $t\in [5]$, and compare the clusterings that are obtained with and without applying the granularity heuristic. $\vec{q}^*(\cdot)$ denotes the heuristic. A positive granularity error indicates that the detected clustering is coarser than the planted clustering. $\rho(C,T)=\cos\dCC(\vec{b}(C),\vec{b}(T))$ is the Pearson correlation between the clustering vectors\added{, which we wish to make close to $1$. We see that $\vec{q}^*(\vec{q}^\sss{(MS)}_t)$ strongly outperforms $\vec{q}^\sss{(MS)}_t$}.}
    \label{fig:Markov-Stability-DCPPM}
\end{figure}

\begin{figure}
    \centering
    \begin{minipage}[b]{0.8\textwidth}
        
        \subcaptionbox{Granularity errors of Markov stability on ABCD networks.\label{fig:Markov-Stability-ABCD-latitude}}{\includegraphics[width=\linewidth]{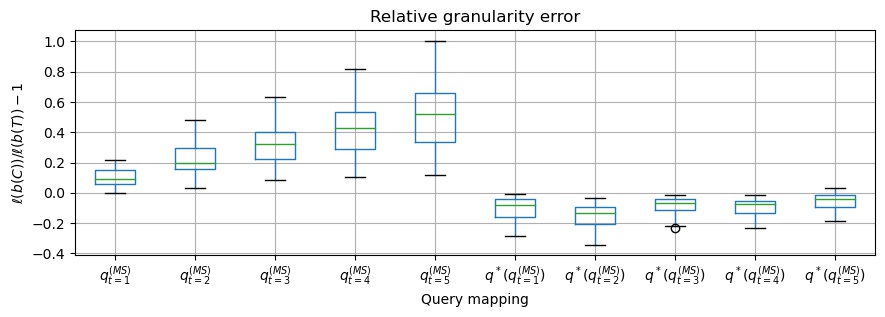}}
        
    \end{minipage}  
   
    \begin{minipage}[b]{0.8\textwidth}
        \subcaptionbox{Performance of Markov stability on ABCD networks.\label{fig:Markov-Stability-ABCD-performance}}{\includegraphics[width=\linewidth]{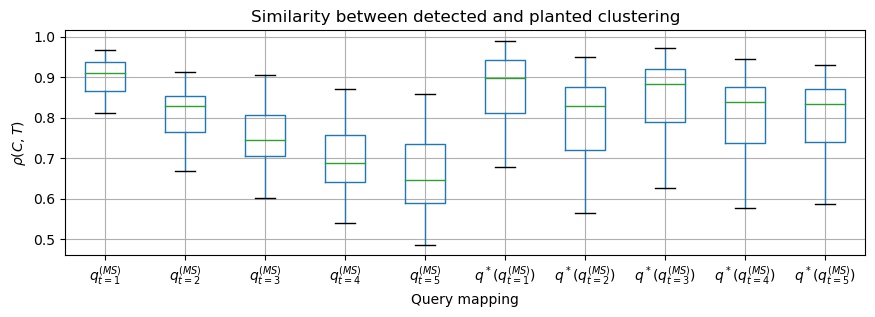}}
        
    \end{minipage}

    \caption{For 50 graphs drawn from the Artificial Benchmark for Community Detection (ABCD), we evaluate Markov stability with time $t\in [5]$, and compare the clusterings that are obtained with and without applying the granularity heuristic. $\vec{q}^*(\cdot)$ denotes the heuristic. A positive granularity error indicates that the detected clustering is coarser than the planted clustering. $\rho(C,T)=\cos\dCC(\vec{b}(C),\vec{b}(T))$ is the Pearson correlation between the clustering vectors\added{, which we wish to make close to $1$. We see that $\vec{q}^*(\vec{q}^\sss{(MS)}_t)$ outperforms $\vec{q}^\sss{(MS)}_t$ for $t\neq 1$}.}
    \label{fig:Markov-Stability-ABCD}
\end{figure}

\noindent\textbf{Effect of the heuristic.}
In Figures~\ref{fig:Markov-Stability-PPM-latitude}, \ref{fig:Markov-Stability-HPPM-latitude}, \ref{fig:Markov-Stability-DCPPM-latitude} and \ref{fig:Markov-Stability-ABCD-latitude}, we see that the granularity heuristic indeed leads to detecting clusterings with granularity closer to the granularity of the planted clustering. For almost all cases, we see that the median relative granularity error after applying the heuristic is closer to zero than before applying the heuristic. The only exception is $t=1$ for the HPPM generator. We see that overall, the granularity heuristic results in clusterings that are slightly more fine-grained than the planted clustering.

In addition, \deleted{the} Figures~\ref{fig:Markov-Stability-PPM-performance}, \ref{fig:Markov-Stability-HPPM-performance}, \ref{fig:Markov-Stability-DCPPM-performance} and \ref{fig:Markov-Stability-ABCD-performance} show that the granularity heuristic typically leads to an increased similarity to the planted clustering. The only two exceptions are HPPM and ABCD for $t=1$, where the granularity heuristic results in slightly lower performance. For the PPM generator, Figure~\ref{fig:Markov-Stability-PPM-performance} shows that for each of the values of $t$, the detection is near-perfect (all similarities are higher than $\rho=0.97$). For the HPPM and DCPPM generators, Figures~\ref{fig:Markov-Stability-HPPM-performance} and~\ref{fig:Markov-Stability-DCPPM-performance} show that the heterogeneity in the community sizes and degrees result in slightly lower performance on these generators.

\noindent\textbf{Markov stability time and granularity.}
Figures~\ref{fig:Markov-Stability-PPM-latitude}, \ref{fig:Markov-Stability-HPPM-latitude}, \ref{fig:Markov-Stability-DCPPM-latitude} and \ref{fig:Markov-Stability-ABCD-latitude} show that for the query vector $\vec{q}^\sss{(MS)}_{t}$ (without applying the heuristic), larger values of $t$ indeed lead to detecting more coarse-grained clusterings. However, $t=1$ already results in clusterings that are more coarse-grained than the planted clustering. We see similar outcomes in Figures~\ref{fig:Markov-Stability-HPPM-latitude}, \ref{fig:Markov-Stability-DCPPM-latitude} and (to a lesser extent) Figure~\ref{fig:Markov-Stability-ABCD-latitude}. Since we are using a discrete-time Markov chain, we cannot consider times $t\in(0,1)$. For a continuous\added{-}time Markov chain, this is possible.

\noindent\textbf{Sparsity and computation time.}
Note that the running time of the Louvain algorithm for Markov Stability vectors depends on the sparsity of the transition matrix $P(t)$. To illustrate: for $t=1$, the transition matrix has the same number of positive entries as the adjacency matrix, and (our implementation of) the Louvain algorithm runs in around $5$ seconds. In contrast, for $t=5$, we have $P(5)_{ij}>0$ for each pair of nodes that is connected by a path of length 5, which leads to a much denser matrix, and the Louvain algorithm takes roughly $200$ seconds per instance. While the continuous-time variant of Markov stability may be able to detect clusterings of finer granularity, the corresponding transition matrix satisfies $P(t)_{ij}>0$ whenever $i$ and $j$ are connected by \emph{any} path. This leads to a much denser matrix, so that the Louvain algorithm will take even more time.

\subsection{Tuning a projection method to a graph generator}\label{sec:training}
In Section~\ref{sec:equivalences}, we have seen that the class of projection methods contains many interesting community detection methods. A further advantage of projection methods is that we can combine different community detection methods by taking linear combinations of their query mappings. On the one hand, this yields infinitely many community detection methods, and gives a lot of flexibility. On the other hand, this begs the question how to choose a suitable query mapping for a task at hand. For instance, are there preferable choices for a particular graph generator? In this section, we will partially address this question by showing how one can \emph{tune} a projection method to a graph generator, in order to maximize the performance on graphs sampled from this generator.

We assume that we are given a generator that produces graphs with their community assignments. We consider linear combinations of a few query mappings and perform a \emph{grid search} to find the coefficients that yield the best query vector. A grid search is a standard approach for hyperparameter tuning, where we discretize each of the parameter intervals and evaluate all possible combinations. To avoid overfitting, one usually generates two different sets of graphs: a \emph{training} set and a \emph{validation} set. The training set is used to find the best hyperparameter combination, while the validation set is used to get an unbiased estimate of the performance of the obtained hyperparameters values.

To demonstrate \added{this method}, we show how we can optimize a projection method for the ABCD graph generator from the previous section. To allow for comparison with this previous section, we consider the same 50 ABCD graphs as the validation set. For the training set, we generate 15 new ABCD graphs from this same generator.

We consider query vectors that are linear combinations of four vectors: the constant vector $\vec{1}$ (to control granularity), the adjacency vector $\vec{v}(A),$ the degree-product vector $\vec{d}(A)$ and the \emph{Jaccard} vector $\vec{j}(A)$. \replaced{The latter is defined}{We define the Jaccard vector} as follows: let $N(i)$ denote the neighborhood of $i$. We follow the convention that $i\in N(i)$ for all $i\in [n]$. Then $\vec{j}(A)_{ij}$ is the Jaccard similarity between the neighborhoods of $i$ and $j$:
\[
\vec{j}(A)_{ij}=\frac{|N(i)\cap N(j)|}{|N(i)\cup N(j)|}.
\]

We consider query vectors that are linear combinations of these four vectors, $\vec{q}=c_1\cdot \vec{1}+c_A\cdot\vec{v}(A)+c_d\cdot\vec{d}(A)+c_j\cdot\vec{j}(A)$. Since in the hyperspherical geometry, the length of the query vector does not affect the detected clustering, we can reduce the number of hyperparameters by one. Assuming that the best combination has $c_A>0$, we set $c_A=1$, thereby making the grid search more efficient. Moreover, for most values of the coefficient $c_1$, the detection method will result in clusterings of a wrong granularity. Thus, instead of fitting $c_1$, we use the granularity heuristic from Section~\ref{sec:granularity}. Then we are left with the following parametrization of query vectors:
\[
\vec{q}(A;c_j,c_d)=\vec{q}^*\left(\vec{v}(A)+c_j\cdot\vec{j}(A)+c_d\cdot\vec{d}(A)\right).
\]
This leaves two hyperparameters to be tuned by the grid search. Note that the vector $\vec{d}(A)$ is a correction term for the degrees. Because of that, the best performance is likely to be found for $c_d\leq0$. We choose the interval $c_d\in[-6,0]$, which we discretize in steps of $\tfrac{1}{2}$.
For the parameter $c_j$, we discretize the interval $[0,1]$ into steps of size $\tfrac{1}{10}$.

The results are shown in Figure~\ref{fig:training}. We see that there is a large region where the method performs well. In particular, the best-performing coefficients are $c_d=-\tfrac{5}{2}$ and $c_j=\tfrac{1}{2}$ with a median performance of $\rho=0.993$ on the training set\footnote{There are four combinations of coefficients that achieve this exact same median performance. We choose this one as it has the highest mean performance.}. We apply this query mapping to the validation set and a median performance of $\rho=0.973$, which is slightly lower than the performance on the training set, as expected due to selection bias. Let us compare this to the performance of the Markov Stability query mappings from Section~\ref{sec:granularity} on these graphs. In Figure~\ref{fig:Markov-Stability-ABCD-performance}, we see that Markov stability with time $t=1$ without applying the heuristic achieves the best median performance on this set of networks. Note that this is equivalent to CL-modularity maximization with $\gamma=1$. This query mapping achieved a median performance of $\rho=0.912$, which is good, but significantly lower than the performance of our optimized query mapping.

\begin{figure}[h!]
    \centering
    \includegraphics[width=0.7\textwidth]{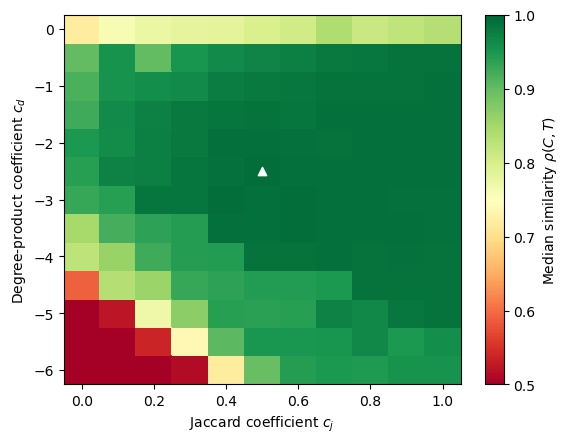}
    \caption{A heatmap of the median performance (similarity between the detected and planted clusterings, as measured by $\rho(C,T)$) for different linear combinations of query mappings. The medians are computed over 10 samples of ABCD graphs, with the same parameters as in the experiments of Section~\ref{sec:granularity}. The best performance is marked with a white triangle.}
    \label{fig:training}
\end{figure}

In this demonstration, we have kept the setup relatively simple by taking combinations of only 4 vectors, reducing this to two coefficients. We did this for simplicity and so that we can visualize the performance on the training set by a two-dimensional heatmap. This already led to strong performance. It is likely that we can improve performance  even further by including a larger number of query vectors, and using a grid search instead of the granularity heuristics to determine the coefficient $c_1$. The obvious downside is that optimization for a larger number of parameters becomes computationally more demanding, and one must increase the size of the training set to avoid overfitting. 

To conclude, we have shown that the class of projection methods unifies many popular community detection methods and is expressive enough to fit realistic benchmark generators like ABCD. This work paves the way to many follow-up research. On the one hand, there are many algorithmic questions: what projection algorithms work well for what query vectors? Are there sets of query vectors for which the minimization of $d_a(\vec{q},\vec{b}(C))$ is \emph{not} NP-hard?
On the other hand, there are also several methodological questions: how do the best-performing coefficients of the linear combination depend on the network properties? For example, how does the \replaced{best-performing coefficient}{coefficient best-performing} $c_d$ depend on the mean and variance of the degree distribution?

\section*{Acknowledgments}
This work was supported by the Netherlands Organisation for Scientific Research (NWO) through the Gravitation {\sc NETWORKS} grant no.\ 024.002.003.

\bibliographystyle{unsrtnat}
\bibliography{main}  






\end{document}